\documentclass[a4paper,fleqn]{cas-dc}
\usepackage[numbers]{natbib}
\usepackage{algorithm} 
\usepackage{algcompatible}
\usepackage{algpseudocode} 
\usepackage{caption} 
\usepackage{physics}
\usepackage{amsthm}
\floatname{algorithm}{Protocol}
\usepackage{xcolor}
\usepackage{float}
\usepackage{subcaption}
\usepackage{enumitem}
\usepackage{capt-of}   
\usepackage{needspace}
\definecolor{deepgreen}{rgb}{0, 0.6, 0.2}
\hypersetup{
  hidelinks,
  colorlinks=true,
  linkcolor=blue,
  citecolor=deepgreen
}

\let\STATE\State

\theoremstyle{definition}
\newtheorem{definition}{Definition}
\newtheorem{lemma}{Lemma}
\newtheorem{theorem}{Theorem}
\begin{document}
\let\WriteBookmarks\relax
\def\floatpagepagefraction{1}
\def\textpagefraction{.001}

\shorttitle{QPSO}    
\shortauthors{Zixian Gong et~al.}  
\title [mode = title]{Verifiable and Collusion-Resistant Multi-Party Quantum Private Set Operations
}

\author[1]{Zixian Gong}[orcid={0009-0005-7059-5040}]
\cormark[1]
\author[1]{Kun Tian}
\author[1]{Yi Zhang}
\author[2]{Fengxia Liu}
\ead{ArtoriasGong@ruc.edu.cn}

\affiliation[1]{organization={School of Mathematics, Renmin University of China},
                city={Beijing},
                postcode={100872}, 
                country={P. R. China}}

\affiliation[2]{organization={Great Bay University},
                city={DongGuan},
                postcode={523808}, 
                country={P. R. China}}
\cortext[1]{Corresponding author}

\begin{abstract}
Private set intersection (PSI) and, more broadly, private set operations (PSO) are fundamental primitives for secure multiparty computation (SMC), enabling participants to jointly compute set relations while revealing no information beyond the prescribed output. As quantum technologies advance, PSI have correspondingly evolved toward quantum secure phase. Existing quantum PSI (QPSI) solutions are limited in their threat models and collusion behavior between third party (TP) and participants. In this work, we present a multi-party QPSI (MP-QPSI) protocol that integrates verifiable quantum fully homomorphic encryption (vQFHE) as the verifiable outsourced quantum-evaluation layer and threshold fully homomorphic encryption (TFHE) as the threshold key-management mechanism. We instantiate the intersection computation via a \(C^{\mathsf{AND}}\) circuit accompanied by simulations on IBM Quantum Platform. We analyze correctness and participant privacy against TP, external eavesdroppers, and collusive behaviors, and we further prove verifiability against a malicious TP under the semantic security model. Finally, we present a modular framework perspective with several realizations, show how to extend the construction to quantum private set union (QPSU) via open-controlled operations. Compared with prior schemes, our protocol provides flexible set operations and stronger resilience under the TP model, including TP-participant collusion, thereby offering enhanced security and broader applicability.
\end{abstract}

\begin{keywords}
Quantum private set intersection \sep Verifiable quantum homomorphic encryption \sep Threshold fully homorphic encryption \sep Collusion resistance \sep Verifiable computation
\end{keywords}

\maketitle

\section{Introduction}
Private set intersection (PSI) is an important branch of secure multiparty computation (SMC), enabling parties to learn the intersection of their private sets while revealing no additional information beyond the intersection~\cite{Mea86,FNP04}. Owing to its strong privacy guarantees, PSI has become a key building block in a broad range of privacy-preserving applications, such as vertical federated learning~\cite{Angelou20}, contact discovery~\cite{Demmler18}, genomic data analysis~\cite{Shen18}.

As privacy requirements evolve, this primitive has been extended into a family of variants. Representative examples include private set union (PSU)~\cite{Vladimir19}, which enables parties to compute the union of private sets, as well as the cardinality-only variants PSI-CA and PSU-CA~\cite{Cristofaro12}, which reveal only the size of the intersection or union. Collectively, these and other functionality-oriented extensions for privacy-preserving set computation are often referred to as private set operations (PSO)~\cite{Kissner05}. These privacy-preserving set computation protocols are primarily instantiated via three mainstream cryptographic paradigms: homomorphic encryption (HE)~\cite{Chen17}, oblivious transfer (OT) and OT-extension, and oblivious pseudorandom functions (OPRF)~\cite{Pinkas14}. More recently, PSI constructions based on oblivious key-value stores (OKVS) have also gained significant traction~\cite{Pinkas20}. With the rapid progress of quantum technologies and advent of Shor’s algorithm~\cite{Shor95}, many cryptographic schemes based on number theoretical hard problems have become vulnerable. Beyond ongoing efforts on post-quantum cryptography (PQC) as an interim solution, a growing line of work turns to quantum cryptography and leverages properties of quantum mechanics to design quantum PSI (QPSI) protocols.

In 2015, \cite{QPSI_Shi15} proposed the first two-party QPSI protocol. However, \cite{Cheng16} later pointed out that it suffers from a fairness issue, and suggested introducing a trusted third party to remedy this limitation. Building on GHZ states, \cite{QPSI_Zhang20} extended the setting to three-parties and realized PSI-CA and PSU-CA under TP formulation, their protocol further provides resistance against collusion between the two participating parties. Subsequently, \cite{QPSI_M23} used single photons and unitary operations to extend QPSI to the multi-party setting (MP-QPSI), but their security guarantee only tolerates corruption of at most one participant, which is overly restrictive in large-scale setting. More recently, \cite{QPSI_Huang24} achieved MP-QPSI via rotation operations, and under the TP formulation, they addressed collusion among participants. Among all the proposed QPSO schemes, some of them have deception-sensitive techniques against a semi-honest TP~\cite{QPSI_Shi15, QPSI_Huang24}, their security guarantees typically rely on the assumption that the TP can not collude with any participant. \emph{This lack of TP-client collusion resistance constitutes a central motivation of our work}. To address it, we turn to another key cryptographic primitive that has advanced alongside quantum technologies, the fully HE (FHE).

After the early conceptual proposal~\cite{Rivest78} and the breakthrough construction of FHE~\cite{Gentry09}, with the extensive follow-up research and implementation in the classical setting, researchers naturally began to ask whether one could encrypt quantum data while still allowing a server to perform arbitrary quantum computations over the ciphertext. QHE was proposed firstly in 2014 \cite{Yu14}. In 2015, \cite{BJ15} introduced a quantum HE (QHE) scheme that the quantum information is encrypted via the quantum one-time pad (QOTP), while the corresponding QOTP keys are encrypted under a classical FHE scheme, enabling the TP to carry out quantum evaluation without learning the underlying plaintext. Because non-Clifford gates (in particular, the $\mathsf{T}$ gate) induce correction terms that depend nonlinearly on the QOTP masks and thus cannot be captured by Pauli $\textsf{X}/\textsf{Z}$ masks alone, \cite{Dulek16} used the garden-hose model~\cite{BFSS13} to construct quantum gadgets to remove the additional corrections. Although a variety of QHE constructions have since been developed,  unlike classical HE, they did not systematically incorporate verifiability of evaluation results until the introduction of verifiable QFHE (vQFHE) in~\cite{vQFHE17}.

\emph{Contribution}. In the QHE setting, the TP or server is typically assumed to possess strong quantum capabilities and can therefore execute relatively complex quantum circuits, and aims to enable such quantum computation while preventing the TP from learning any private information, \emph{it aligns naturally with the design objectives of QPSI}. Moreover, vQFHE can verify the correctness of the evaluated results. Building on this property, our contributions are as follows:
\begin{itemize}
    \setlength{\itemsep}{0pt}    
    \setlength{\parskip}{0pt}    
    \setlength{\topsep}{2pt}     
    \item We combine Threshold FHE (TFHE)~\cite{AJW11,Boneh18} with the vQFHE construction TrapTP from \cite{vQFHE17} to extend it to a multi-party setting. Specifically, in a model consisting of a powerful quantum TP, $n$ data-holding participants, and a trusted authority (TA) responsible for key generation, we instantiate the intersection computation via a $C^{\mathsf{AND}}$ circuit and simulated it on the IBM Quantum platform and thereby realize a MP-QPSI protocol.
    \item By leveraging the verifiability of vQFHE, our protocol achieves a key distinguishing property compared to existing QPSI schemes: it can use the intermediate classical information to efficiently detect whether the TP deviates from the \emph{prescribed target circuit}, thereby ensuring the correctness of the output and mitigating malicious behavior by the TP.
    \item Owing to TFHE’s threshold key sharing, our protocol our protocol provides collusion resistance that is absent from prior QPSI: as long as the number of corrupted parties is below the threshold $n_t$, \emph{neither collusion among participants nor collusion between the participants and the TP} can recover the secret key or learn the private inputs of honest parties including the final intersection information. Moreover, the structure improves availability in multi-party setting, allowing decryption and result recovery to proceed at least $n_t$ parties remain online, thereby tolerating temporary participants offline.
    \item We further demonstrate the extensibility of our construction: by replacing specific modules, the protocol can be adapted to realize different functionalities, such as reducing circuit complexity and enabling MP-QPSU via open-controlled operations.
\end{itemize}

\emph{Outline}. The remainder of this paper is organized as follows. Section~\ref{Pre} provides the necessary preliminaries. The proposed MP-QPSI protocol is presented in detail in Section~\ref{sPro}. Security analysis and proofs are given in Section~\ref{analysis}. A framework perspective and performance comparisons with related schemes are provided in Sections~\ref{Framework} and~\ref{Performance}, respectively. Finally, Section~\ref{Conclusion} will give a conclusion.
\section{Preliminaries}\label{Pre}
\subsection{Threshold FHE \& Secret Sharing}\label{TFHESS}
\begin{definition}[TFHE]\cite{AJW11}\label{TFHE}
For a set of $n$ parties $\{P_i\}_{i\in[n]}$ with threshold $t\le n$. A $t$-out-of-$n$ $\mathsf{TFHE}$ scheme is a 4-tuple of algorithms $(\mathsf{TFHE.KeyGen},\mathsf{TFHE.Enc},\mathsf{TFHE.Ev}\\\mathsf{al},\mathsf{TFHE.Dec})$ defined as follows:
\begin{itemize}
  \setlength{\itemsep}{0pt}    
  \setlength{\parskip}{0pt}    
  \setlength{\topsep}{2pt}     
  \item $\mathsf{TFHE.KeyGen}(1^\kappa)$: Take as input the security parameter $\kappa$, each party $P_i$ obtains a tubple of keys including the common public key $pk$, evaluation key $evk$ and a private share $sk_i$ of the secret key $sk$.
  \item $\mathsf{TFHE.Enc}_{pk}(\mu)$: Input the public key $pk$ and a plaintext in message space $\mu\in\mathcal{M}$, output a ciphertext $c$.
  \item $\mathsf{TFHE.Eval}_{evk}(f,c_1,\dots,c_\ell)$: For some ciphertexts $c_1,\\\dots,c_\ell$ and a bounded boolean circuit
  $f$, this deterministic poly-time algorithm uses $evk$ and output an evaluated ciphertext $c'$.
  \item $\mathsf{TFHE.Dec}_{\{sk_i\}_{i \in S}}(c')$: Given an evaluated ciphertext $c'$ and the $\{sk_i\}$ of any subset $S\subseteq [n]$ with $|S|\ge t$, the parties in $S$ jointly decrypt $c'$ to obtain a plaintext $u' = f(u_1,\dots,u_\ell)$.
\end{itemize}
\end{definition}
\begin{definition}[TFHE Evaluation Correctness]\label{TFHE.Correctness}
For a $t$-out-of-$n$ $\mathsf{TFHE}$ scheme as in Definition~\ref{TFHE}. We say that it satisfies evaluation \emph{correctness} if for all security parameters $\kappa$, all circuits $f$ in the supported class, and every subset $S\subseteq[n]$ with $|S|\ge t$. Let $(pk,\{sk_i\}_{i\in[n]})\leftarrow\mathsf{TFHE.KeyGen}(1^\kappa)$ and $c_i\leftarrow\mathsf{TFHE.Enc}_{pk}(\mu_i)\ (i\in[\ell])$, $c'\leftarrow\mathsf{TFHE.Eval}_{evk} (f,c_1,\dots,c_\ell)$,  the following holds:
\[
  \Pr\big[\mathsf{TFHE.Dec}_{\{sk_i\}_{i \in S}}(c') = f(\mu_1,\dots,\mu_\ell)\big]
  = 1-\mathsf{negl}(\kappa).
\]
\end{definition}

\begin{definition}[Threshold SS]\cite{Boneh18}\label{ss}
Let $P=\{P_i\}_{i\in[n]}$ be a set of parties and $t\le n$. A $(t,n)$-threshold secret sharing scheme for secret space $\mathcal{K}$ is a pair of algorithms $\mathsf{SS}=(\mathsf{SS.Share},\mathsf{SS.Combine})$:
\begin{itemize}
  \setlength{\itemsep}{0pt}
  \setlength{\parskip}{0pt}
  \setlength{\topsep}{2pt}
  \item $\mathsf{SS.Share}(k)\to (s_1,\dots,s_n)$:
  On input a secret $k\in\mathcal{K}$ outputs a share $s_i$ for each party $P_i$.
  \item $\mathsf{SS.Combine}(\{s_i\}_{i\in S})\to k$:
  On input the shares of a subset $S\subseteq[n]$ for $|S|\ge t$ outputs either a secret $k\in\mathcal{K}$ or $\perp$.
\end{itemize}
\end{definition}
\begin{definition}[SS Correctness]\label{SS.Correctness}
A $(t,n)$-threshold secret sharing scheme in Definition~\ref{ss} is \emph{correct} if for all $k\in\mathcal{K}$ and all $S\subseteq[n]$ with $|S|\ge t$, letting  
$(s_1,\dots,s_n)\leftarrow\mathsf{SS.Share}(k)$ we have
\[
  \Pr\bigl[\mathsf{SS.Combine}(\{s_i\}_{i\in S}) = k\bigl] = 1 .
\]
\end{definition}

\begin{definition}[SS Privacy]\cite{Shamir79}\label{SS.Privacy}
A $(t,n)$-threshold secret sharing scheme has \emph{$t$-privacy} if  
for all $S\subseteq[n]$ with $|S|\le t-1$, any $k_0,k_1\in\mathcal{K}$, and
independent randomness, if
$(s_{b,1},\dots,s_{b,n})\leftarrow\mathsf{SS.Share}(k_b)$ for $b\in\{0,1\}$, then
\[
  \{s_{0,i}\}_{i\in S} \approx \{s_{1,i}\}_{i\in S}.
\]
\end{definition}
\begin{lemma}\label{tfhe-basic}
For a $\mathsf{TFHE}$ scheme whose secret key $sk$ is shared by a $(t,n)$-$\mathsf{SS}$ scheme, the following hold:
\begin{itemize}
  \setlength{\itemsep}{0pt}
  \setlength{\parskip}{0pt}
  \setlength{\topsep}{2pt}
  \item Any group of $t$ or more out of the $n$ participants can collectively reconstruct the private key $sk$;
  \item Any group of at most $t-1$ participants cannot reconstruct the private key $sk$;
  \item Ciphertexts under $\mathsf{TFHE}$ can be homomorphically evaluated via $\mathsf{TFHE.Eval}$;
  \item Any ciphertext can be correctly decrypted by any qualified set $S$ with $|S|\ge t$.
\end{itemize}
\end{lemma}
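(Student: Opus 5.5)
The plan is to verify the four items of Lemma~\ref{tfhe-basic} one at a time, each straight from the preceding definitions. The lemma packages properties that the definitions already postulate, so the proof is mostly bookkeeping. The one point needing care is how ``cannot reconstruct'' is made precise.

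For the first item, we apply SS correctness (Definition~\ref{SS.Correctness}) to the sharing $(sk_1,\dots,sk_n)\leftarrow\mathsf{SS.Share}(sk)$ produced inside $\mathsf{TFHE.KeyGen}$. For any $S\subseteq[n]$ with $|S|\ge t$, we get $\Pr[\mathsf{SS.Combine}(\{sk_i\}_{i\in S})=sk]=1$, which is exactly reconstruction.

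The third item is the existence and well-definedness of the deterministic polynomial-time algorithm $\mathsf{TFHE.Eval}_{evk}$ in Definition~\ref{TFHE}, applied to any bounded boolean circuit $f$.

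For the fourth item, we invoke Definition~\ref{TFHE.Correctness}: for every $S$ with $|S|\ge t$ and every evaluated ciphertext $c'$, $\mathsf{TFHE.Dec}_{\{sk_i\}_{i\in S}}(c')=f(\mu_1,\dots,\mu_\ell)$ except with probability $\mathsf{negl}(\kappa)$. A fresh ciphertext is the case $f=\mathrm{id}$. An alternative route is to reconstruct $sk$ as in the first item and decrypt non-thresholdly, but that would need the underlying single-key correctness, so we cite threshold correctness directly.

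The main obstacle is the second item, since ``cannot reconstruct'' has to be formalized. We will prove it by reduction to $t$-privacy (Definition~\ref{SS.Privacy}).
\begin{itemize}
\item Suppose a coalition $S$ with $|S|\le t-1$ has an algorithm $\mathcal{A}$ that outputs $sk$ from $\{sk_i\}_{i\in S}$ with non-negligible advantage over guessing.
\item Build a distinguisher between $\{s_{0,i}\}_{i\in S}$ and $\{s_{1,i}\}_{i\in S}$ for two distinct keys $k_0,k_1$: run $\mathcal{A}$ and output $b$ if the result equals $k_b$.
\item The success probabilities of $\mathcal{A}$ on the two distributions then differ non-negligibly, contradicting $\{s_{0,i}\}_{i\in S}\approx\{s_{1,i}\}_{i\in S}$.
\end{itemize}
One subtlety must be handled: the coalition may also hold $pk$ and $evk$, which are correlated with $sk$. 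The privacy statement therefore only gives the conclusion relative to the shares. We will state the item as ``the shares of at most $t-1$ parties reveal no information about $sk$ beyond what is public.'' Protecting $sk$ against use of $pk$ and $evk$ is then deferred to the semantic security of the underlying FHE, which we assume as in \cite{AJW11,Boneh18}.
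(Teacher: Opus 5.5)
Your plan is correct and matches how the paper treats this lemma: the paper states it without proof, as a repackaging of Definitions~\ref{TFHE}--\ref{SS.Privacy} and the cited constructions, and your reading of the second item as ``the shares of at most $t-1$ parties reveal nothing about $sk$'' is exactly how the paper later invokes it in the proof of Theorem~\ref{P_against_Pi}. Two small points to tighten: Definition~\ref{TFHE.Correctness} covers only evaluated ciphertexts, so your $f=\mathrm{id}$ step needs $\mathsf{TFHE.Eval}$ on the trivial circuit to return its input (or a separate fresh-ciphertext correctness assumption), and your distinguisher needs a short averaging argument over the key distribution to fix a concrete pair $k_0,k_1$ on which the success of $\mathcal{A}$ differs.
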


\subsection{QHE\,\cite{BJ15}}\label{SQHE}
\begin{definition}[QHE]\label{QHE}
Let $\kappa$ be the security parameter.
A quantum homomorphic encryption scheme $\mathsf{QHE}$ for message space $\mathcal{M}$ and cipherspace $\mathcal{C}$ is a 4-tuple of quantum polynomial-time (QPT) algorithms $(\mathsf{QHE.KeyGen},\mathsf{QHE.Enc},\\\mathsf{QHE.Eval},\mathsf{QHE.Dec})$
defined as follows:
\begin{itemize}
  \setlength{\itemsep}{0pt}
  \setlength{\parskip}{0pt}
  \setlength{\topsep}{2pt}
  \item $\mathsf{QHE.KeyGen}(1^\kappa)$: On input $1^\kappa$ outputs a classical public key $pk$, a classical secret key $sk$, and a quantum evaluation key $\rho_{\mathsf{evk}}\in D(\mathcal{R}_{\mathsf{evk}})$.
  \item $\mathsf{QHE.Enc}_{pk}(\rho)$: Through $pk$, this quantum channel maps a message state $\rho\in D(\mathcal{M})$ to a ciphertext state $\sigma\in D(\mathcal{C})$.
  \item $\mathsf{QHE.Eval}_{\rho_{evk}}^{\mathsf{C}}(\sigma)$: For every quantum circuit $\mathsf{C}$ with $\Phi_{\mathsf{C}} : D(\mathcal{M}^{\otimes n})\to D(\mathcal{M}^{\otimes m})$, $\mathsf{QHE.Eval}_{\rho_{evk}}^{\mathsf{C}}$ uses the evaluation key $\rho_{\mathsf{evk}}$ and maps $n$ ciphertext registers $\sigma$ to $m$ ciphertext registers $\sigma'$ as $D(\mathcal{R}_{\mathsf{evk}}\otimes\mathcal{C}^{\otimes n}) \to D(\mathcal{C}'^{\otimes m})$.
  \item $\mathsf{QHE.Dec}_{sk}(\sigma')$: For every $sk$, the quantum channel
  $\mathsf{QHE.Dec}_{sk}$ maps the ciphertext state $\sigma'$ in $D(\mathcal{C}')$ back to a plaintext state $\rho'$ in $D(\mathcal{M})$.
\end{itemize}
\end{definition}
\begin{definition}[QHE Evaluation Correctness]\label{QHE.Correctness}
A $\mathsf{QHE}$ scheme is \emph{correctness} if for every efficient quantum circuit $\mathsf{C}$ with induced channel $\Phi_\mathsf{C}$ and every input state
$\rho\in D(\mathcal{M}^{\otimes n})$,
\[
  \Bigl\|
    \mathsf{QHE.Dec}_{sk}^{\otimes m}
    \bigl(\mathsf{QHE.Eval}^{\mathsf{C}}_{\rho_{\mathsf{evk}}}
          (\mathsf{QHE.Enc}_{pk}^{\otimes n}(\rho))\bigr)
    - \Phi_{\mathsf{C}}(\rho)
  \Bigr\|_{\mathrm{tr}}
\]
is at most $\mathsf{negl}(\kappa)$.
\end{definition}

\subsection{vQFHE\,\cite{vQFHE17}}\label{svQFHE}
\begin{definition}[vQFHE]\label{vQFHE}
Let $\kappa$ be the security parameter. A $\mathsf{vQFHE}$ scheme is a set of QPT algorithms $(\mathsf{vQFHE.KeyGen},\\ \mathsf{vQFHE.Enc}, \mathsf{vQFHE.Eval},\mathsf{vQFHE.VerDec})$ defined as follows:
\begin{itemize}
  \setlength{\itemsep}{0pt}
  \setlength{\parskip}{0pt}
  \setlength{\topsep}{2pt}
  \item $\mathsf{vQFHE.KeyGen}(1^\kappa)$ and $\mathsf{vQFHE.Enc}_{pk}(\rho)$ have the same syntax as $\mathsf{QHE.KeyGen}$ and $\mathsf{QHE.Enc}_{pk}$ in Definition~\ref{QHE}.
  \item $\mathsf{vQFHE.Eval}^{\mathsf{C}}_{\rho_{\mathsf{evk}}}(\sigma)$: On input a quantum circuit $\mathsf{C}$, $\mathsf{vQHE.Eval}_{\rho_{evk}}^{\mathsf{C}}$ uses the evaluation key $\rho_{\mathsf{evk}}$ and maps $\sigma$ to $\sigma'$ with an extra classical computation $log$ output as $D(\mathcal{R}_{\mathsf{evk}}\otimes\mathcal{C}^{\otimes n}) \to D(\mathcal{L}\otimes\mathcal{C}'^{\otimes m})$. 
  \item $\mathsf{vQFHE.VerDec}_{sk}(\mathsf{C},log,\sigma')$:
  Using the secret key $sk$, a circuit $\mathsf{C}$, a computation $log$, and a ciphertext state $\sigma'$, the algorithm outputs a pair $(\rho',b)$, where $\rho'$ is the decrypted state and $b\in\{\mathsf{acc},\mathsf{rej}\}$ is the verification result.
\end{itemize}
\end{definition}
\begin{definition}[vQFHE Evaluation Correctness]\label{vQFHE.Correctness}
A $\mathsf{vQFHE}$ scheme is \emph{correctness} if for every security parameter $\kappa$, every key triple $(pk,sk,\rho_{\mathsf{evk}})\leftarrow\mathsf{vQFHE.KeyGen}(1^\kappa)$ and every poly-size quantum circuit $\mathsf{C}$ acting on plaintext, the channel induced by honest execution satisfies
\[
  \mathsf{vQFHE.VerDec}^{\mathsf{C}}_{sk}
  \circ
  \mathsf{vQFHE.Eval}^{\mathsf{C}}_{\rho_{\mathsf{evk}}}
  \circ
  \mathsf{vQFHE.Enc}_{pk}
  =
  \Phi_{\mathsf{C}}
\]
with overwhelming probability $1-\mathsf{negl}(\kappa)$.
\end{definition}

\begin{definition}[$\kappa$-SEM-VER]\label{sem-ver}
As to a $\mathsf{vQFHE}$ scheme, for any QPT adversary $\mathcal{A}$ which manipulates a ciphertext (and side info), $\mathcal{A}$ outputs a modified ciphertext together with a circuit description $\mathsf{C}$ and a computation $log$, the  \emph{real} channel $\Phi_\mathcal{A}$ is defined as follows:
\[
  \mathsf{Real}^{\mathcal{A}}
  \ :=\ 
  \mathsf{vQFHE.VerDec}^{\mathsf{C}}_{sk}\ \circ\ \mathcal{A}\ \circ\ \mathsf{vQFHE.Enc}_{pk}.
\]
For any QPT simulator $\mathcal{S}$ that never sees a ciphertext but
only declares a circuit and an $\mathsf{acc/rej}$ decision, this defines the
\emph{ideal} channel $\Phi_\mathcal{S}$:
\[
  \mathsf{Ideal}^{\mathcal{S}}
  \ :=\ 
  \mathsf{ctrl\text{-}\oslash}\ \circ\ \Phi_{\mathsf{C}}\ \circ\ \mathcal{S},
\]
where $\Phi_{\mathsf{C}}$ is the channel implemented on the plaintext directly by the circuit $\mathsf{C}$, and $\mathsf{ctrl\text{-}\oslash}$ replaces the output by a fixed state whenever the decision is "$\mathsf{rej}$".

The $\mathsf{vQFHE}$ is \emph{semantically $\kappa$-verifiable} if for every QPT adversary $\mathcal{A}$, there exists a QPT simulator $\mathcal{S}$ such that for all QPT message generators $\mathcal{M}$ and distinguishers
$\mathcal{D}$,
\[
\Bigl|
  \Pr\bigl[\mathcal{D}(\mathsf{Real}^{\mathcal{A}}(\mathcal{M}(\rho_{\mathsf{evk}}))) = 1\bigr]
  -
  \Pr\bigl[\mathcal{D}(\mathsf{Ideal}^{\mathcal{S}}(\mathcal{M}(\rho_{\mathsf{evk}}))) = 1\bigr]
\Bigr|
\]
with negligible probability $\mathsf{negl}(\kappa)$.
\end{definition}

\subsection{QOTP\,\cite{qotp,qotprograms}}\label{SQOTP}
\begin{definition}[QOTP]\label{QOTP}
For a single-qubit state $\rho$ and a classical key
$(a,b)\in\{0,1\}^2$, the quantum one-time pad masks $\rho$ by Pauli operators
\[
  \textrm{QOTP}_{a,b}(\rho) \;=\; \mathsf{X}^{a} \mathsf{Z}^{b}\,\rho\,\mathsf{Z}^{b} \mathsf{X}^{a}.
\]
Using the same key $(a,b)$ again removes the mask, so the plaintext is
easily recovered. If $(a,b)$ is chosen uniformly at random, then for all
$\rho$,
\[
  \frac{1}{4}\sum_{a,b\in\{0,1\}} \mathsf{X}^{a} \mathsf{Z}^{b}\,\rho\,\mathsf{Z}^{b} \mathsf{X}^{a}
  \;=\; \frac{\mathbb{I}_2}{2},
\]
This property is used in the $\mathsf{QHE.Enc}$ that the encrypted state is maximally mixed from the adversary’s point of view.
\end{definition}
\subsection{MAC\,\cite{Broadbent16}}\label{SMAC}
\begin{definition}[MAC]\label{MAC}
Let $\mathcal{M}$ be a message space and $\mathcal{T}$ be a tag space.
A MAC is a triple of algorithms $\mathsf{MAC}=(\mathsf{MAC.KeyGen},\mathsf{MAC.Sign},\mathsf{MAC.Ver})$ defined as follows:
\begin{itemize}
  \setlength{\itemsep}{0pt}
  \setlength{\parskip}{0pt}
  \setlength{\topsep}{2pt}
  \item $\mathsf{MAC.KeyGen}(1^\kappa)$:
        Outputs a MAC key $k$.
  \item $\mathsf{MAC.Sign}_k(m)$:
        Using the MAC key $k$, for a message $m\in\mathcal{M}$ returns a tag pair $(m,\tau)$ where $\tau\in\mathcal{T}$.
  \item $\mathsf{MAC.Ver}_k(m,\tau)$:
    For $(m,\tau)\in\mathcal{M}\times\mathcal{T}$ returns a bit
        $b\in\{0,1\}$ indicating whether $\tau$ is a valid tag on $m$
        under key $k$.
\end{itemize}
\end{definition}
\begin{definition}[MAC Correctness]\label{MAC.correctness}
For a $\mathsf{MAC}$ scheme as defined in Definition~\ref{MAC}. We say that $\mathsf{MAC}$
satisfies correctness if for every security parameter $\kappa$ and every message $m\in\mathcal{M}$, $k \leftarrow \mathsf{MAC.KeyGen}(1^\kappa)$ and $\tau \leftarrow \mathsf{MAC.Sign}_k(m)$:
\[
  \Pr\bigl[\,b=1 :\; b \leftarrow \mathsf{MAC.Ver}_k(m,\tau)
  \bigr] = 1 .
\]
\end{definition}
\begin{definition}[MAC Security]\label{MAC.Security}
$\mathsf{MAC}$ is \emph{unconditional one-time secure} if for every probabilistic algorithms $\mathcal{A}$:
\[\Pr\bigl[
      \mathsf{MAC.Ver}_k(m,\tau)=1 :
      k \xleftarrow{\$} \{0,1\}^\kappa,\,
      (m,\tau) \leftarrow \mathcal{A}(1^\kappa)
  \bigr]
\]
is negligible in a security parameter $\kappa$.
\end{definition}

\begin{definition}[MAC EUF-CMA Security]\label{def:MAC-EUF-CMA}
Let $\mathsf{MAC}=(\mathsf{MAC.KeyGen},\mathsf{MAC.Sign},\mathsf{MAC.Ver})$
be a MAC for message space $\mathcal{M}$. Consider the following
experiment with an adversary $\mathcal{A}$:

\medskip
\noindent
\textbf{Experiment} $\mathsf{EUF\mbox{-}CMA}^{\mathcal{A}}_{\mathsf{MAC}}(\kappa)$
\begin{enumerate}
  \setlength{\itemsep}{0pt}
  \setlength{\parskip}{0pt}
  \setlength{\topsep}{2pt}
  \item Sample $k \leftarrow \mathsf{MAC.KeyGen}(1^\kappa)$.
  \item Give $\mathcal{A}$ oracle access to
        $\mathcal{O}_{\mathsf{sign}}(m)=\mathsf{MAC.Sign}_k(m)$.
        Let $Q$ be the set of messages queried to this oracle.
  \item Eventually $\mathcal{A}$ outputs a pair $(m^\star,\tau^\star)$.
        The experiment outputs $1$ (``$\mathcal{A}$ wins'') iff
        $\mathsf{MAC.Ver}_k(m^\star,\tau^\star)=1$ and
        $m^\star\notin Q$.
\end{enumerate}

We say that $\mathsf{MAC}$ is \emph{existentially unforgeable under
adaptive chosen-message attacks} (EUF-CMA secure) if for every PPT
adversary $\mathcal{A}$,
\[
  \Pr\bigl[\,\mathsf{EUF\mbox{-}CMA}^{\mathcal{A}}_{\mathsf{MAC}}(\kappa)=1\,\bigr]
  \le \mathsf{negl}(\kappa) .
\]
\end{definition}

\section{Proposed Protocol}\label{sPro}
\begin{figure*}[t]  
    \centering
    \includegraphics[width=0.8\linewidth]{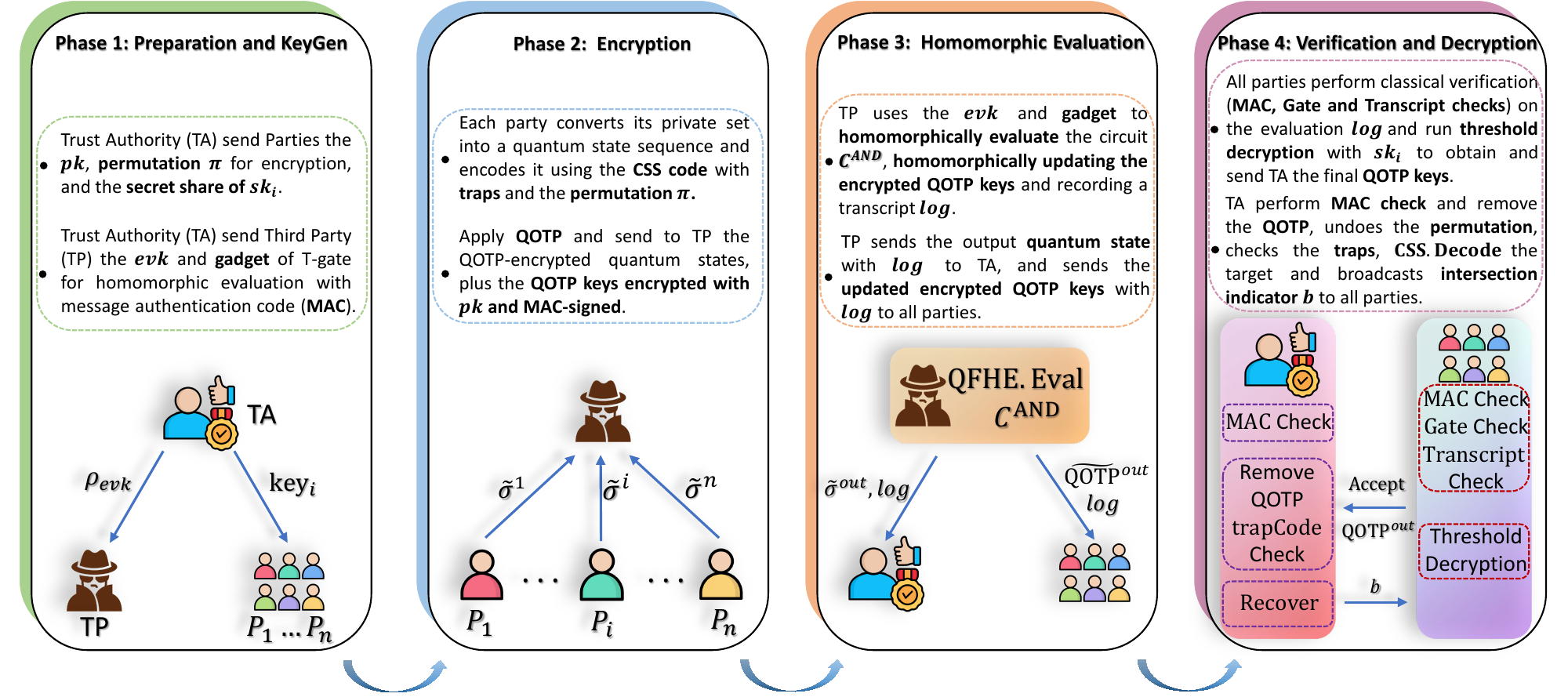}
    \caption{Overview of the proposed MP-QPSI protocol.}
    \label{overview}
\end{figure*}
\noindent\textbf{Protocol Overview.} In this section we explain how TrapTP-based vQFHE, combined with a classical threshold FHE scheme, can be lifted to the multi-party setting and instantiated in our MP-QPSI protocol in Protocol~\ref{MP-QPSI}.  For ease of exposition, we first provide a high-level flowchart of the whole protocol in Fig.~\ref{overview}.  As illustrated there, the protocol proceeds in four phases. From a capability perspective, we model the TP as a powerful quantum server capable of running large-scale quantum computations
and the associated classical homomorphic updates. TA is assumed to have only moderate quantum resources, sufficient for preparing a small number of authenticated auxiliary tools for evaluation while each participant \(P_i\) is a lightweight client equipped with very limited quantum capability, only needed to execute a fixed shallow encryption circuit; all remaining operations on the participants' side are purely classical. The concrete actions of TP, TA and the participants in each phase will be detailed in the following content.\\

\noindent\textbf{Preparation and KeyGen.} In the preparation phase, only the TA is active.  
TA first initializes the classical \textsf{TFHE} scheme and derives the evaluation keys, which will later be used by TP to homomorphically update QOTP ciphertexts and sampling a permutation \(\pi\) to hide the traps. Using the TrapTP \textsf{GadgetGen} algorithm~\cite{vQFHE17} to prepare \textsf{T}-gate gadgets with \textsf{MagicPool} for non-clifford gates in the quantum circuits. TA also generates an auxiliary target block with QOTP mask and signs all classical information above with a fixed EUF-CMA secure MAC \cite{MAC20}. Finally, TA prepares the client-side classical keys by choosing the base encryption key \(pk_0\), \textsf{SS.Share} the final decryption key into shares \(\{sk_i^{t}\}\). For the whole protocol, we use $\,\widetilde \cdot \,$ to denote the encryptions of both classical and quantum information.\\
\begin{figure*}[t]  
    \centering
    \includegraphics[width=0.6\linewidth]{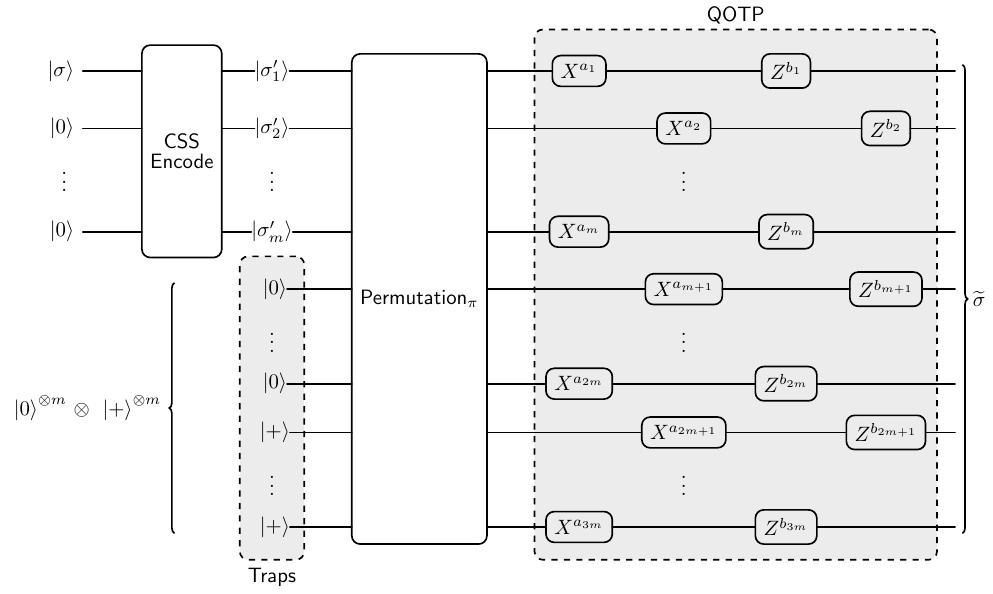}
    \caption{Quantum Circuit of Encryption.}
    \label{circuit1}
\end{figure*}

\noindent\textbf{Encryption.} In the encryption phase, all participants \(P_i\) are active as depicted in Fig.~\ref{circuit1}.  Following TrapTP~\cite{vQFHE17}, we fix a self-dual \([[m,1,d]]\) \textsf{CSS} code (so that \textsf{H} and \textsf{CNOT} are transversal), which can correct \(\kappa = d_c\) errors where \(d = 2d_c+1\) and block size \(m = \mathrm{poly}(d)\) chosen sufficiently large.  Each party \(P_i\) first sets its private set \(\mathcal{S}_i\) into a uniform-length quantum sequence \(\mathcal{S}_i^Q\) according to a common ordering of the universe.  For every logical qubit, \(P_i\) applies \textsf{CSS} encoding, appends \(m\) \(|0\rangle\)-traps and \(m\) \(|+\rangle\)-traps, and hides the trap positions using the permutation \(\pi\) (in principle, each party could use its own permutation \(\pi_i\), at the cost of a more involved classical homomorphic update at TP).  The resulting \(3m\)-qubit block is then masked by QOTP, the corresponding QOTP keys are encrypted and MAC-signed using the local key \(k_i\), and all these ciphertexts are sent to TP.\\

\noindent\textbf{Homomorphic Evaluation.} In this phase only TP is active.  After receiving the authenticated evaluation key package \(\rho_{\mathsf{evk}}\) from TA and the encrypted quantum inputs \(\widetilde{\mathcal{S}}_i^{Q}\) from all participants, TP homomorphically evaluates the fixed multi-party AND
circuit \(C^{\mathsf{AND}}\) on these ciphertexts, following the TrapTP evaluation procedure~\cite{vQFHE17} and the standard QHE paradigm~\cite{Dulek16,BJ15}.  

The circuit is viewed as an alternating sequence of Clifford layers \(C^{(\ell)}_{\mathsf{Cliff}}\) and non-Clifford \(\mathsf{T}\)-layers \(C^{(\ell)}_{\mathsf{T}}\).  For a
Clifford layer, the relation \(\mathsf{C}\, \mathsf{X}^{a}\mathsf{Z}^{b}= \mathsf{X}^{a'}\mathsf{Z}^{b'} \mathsf{C}\) allows TP to use \(\mathsf{HE.Eval}\) to homomorphically update the encrypted QOTP keys \((\widetilde a,\widetilde b)\) to \((\widetilde {a'},\widetilde {b'})\) in the classical ciphertext space. When entering a non-Clifford \(\mathsf{T}\)-layer, the relation \(\mathsf{T}\mathsf{X}^{a}\mathsf{Z}^{b} = \mathsf{P}^{a}\mathsf{X}^{a}\mathsf{Z}^{b}\mathsf{T}\) introduces an extra phase \(\mathsf{P}^{a}\) that depends on the secret key bit \(a\).  Since TP is not allowed to learn \(a\) and only holds its encryption \(\widetilde a\), Thorough "Gadget" thought in \cite{GC99}, TP removes this unwanted phase by consuming one gadget generated by \textsf{GadgetGen} based on the garden-hose model~\cite{BFSS13}. At the same time, the corresponding classical QOTP-key ciphertexts are homomorphically \emph{recrypted}, so that after processing all \(t\) \(\mathsf{T}\)-gates the final keys are encrypted under the last-level public key \(pk_t\).  To maintain authentication security in the sense of~\cite{Broadbent16}, due to the introduction of the trap code, permutation and \textsf{CSS} structure, at each logical gate application it is necessary to perform
appropriate logical measurements, and to use the evaluation key together with the encrypted permutation \(\widetilde{\pi}\) to homomorphically evaluate the corresponding \(\mathsf{unpermute}\) and \(\mathsf{permute}\) operations; we refer to~\cite{vQFHE17,qotprograms}
for the detailed procedure and for a comparison of the underlying trap-code constructions.

After \(C^{\mathsf{AND}}\) have been processed, the logical multi-party AND has been applied to the prepared auxiliary block \(\widetilde\sigma_{\mathsf{aux}}\), yielding a output block \(\widetilde\sigma_{\mathsf{out}}\) and the updated encrypted QOTP keys \((\widetilde a^{\,\mathsf{out}},\widetilde b^{\mathsf{out}})\) under \(pk_t\).  TP appends all intermediate classical information—the FHE evaluation transcript, the sequence of evaluated gates to a log that will later be used for verification.
Finally, TP sends \((\widetilde a^{\,\mathsf{out}},\widetilde b^{\mathsf{out}}, \log,C^{\mathsf{AND}})\) to all participants \(P_i\), and, due to the no-cloning property of quantum information, forwards the unique quantum output block \(\widetilde\sigma_{\mathsf{out}}\) together with the same log to TA.\\

\noindent\textbf{Verification and Decryption.} In the final phase, the parties jointly run a two-layer verification and decryption procedure, consisting of  \emph{Classical.VerDec} and \emph{Quantum.VerDec}.  On the classical side, all participants \(P_i\) perform MAC check, gate array check and transcript (log) check and apply threshold decryption on the final QOTP keys while TA applies trap-based check on the quantum side and proceeds with the final decryption to recover the intersection result.

\begin{algorithm*}[!tp]
  \caption{MP-QPSI}\label{MP-QPSI}
  \begin{algorithmic}[1]
  \STATE \textbf{Protocol Input:} Security parameter $\kappa$, the upper bound on the number of $\mathsf{T}$ and $\mathsf{H}$ gates $t,h\in \mathbb{N}$, number of parties $n$, and a fixed self-dual CSS code $\mathsf{CSS}\,[[m,1,d]]$.
    \STATE \textbf{Protocol Output:} $\perp$ if protocol abort, else \{0,1\} represent the intersection output.
    \STATE \textbf{Phase 1: Preparation and KeyGen}
    \STATE For Trust Authority (TA):
    \STATE \hspace{0.6em} Run $k_{\mathsf{TA}}\leftarrow \mathsf{MAC.KeyGen}(1^\kappa)$ to obtain the MAC key for TA.
    \STATE \hspace{0.6em} Uniformly sample a permutation $\pi \leftarrow_R S_{3m}$ from the permutation group on $3m$ positions.
    \STATE \hspace{0.6em} For each $i \in \{0,\dots,t\}$, run
           $(sk_i,pk_i,evk_i) \leftarrow \mathsf{HE.KeyGen}(1^\kappa)$.
    \STATE \hspace{0.6em} Set
           $\mathsf{KEY}_{eval} := (evk_0,\dots,evk_t,pk_0,\dots,pk_t)$ and compute
           $\widetilde{\pi} \leftarrow \mathsf{HE.Enc}_{pk_0}(\pi)$.
    \STATE \hspace{0.6em} Prepare the target state as $\sigma_{\mathsf{aux}}$, sample QOTP keys $(a_{0},b_{0})\leftarrow_R \{0,1\}$, then encrypt $\sigma_{\mathsf{aux}}$ from QPTP as: $\sigma'_{\mathsf{aux}} \leftarrow \mathsf{X}^{a_{0}}\mathsf{Z}^{b_{0}}\sigma_{\mathsf{aux}}\mathsf{X}^{a_{0}}\mathsf{Z}^{b_{0}}$ and return \[\widetilde{\sigma}_{\mathsf{aux}} \leftarrow \sigma'_{\mathsf{aux}}\otimes \mathsf{MAC.Sign}_{k_{TA}}(\mathsf{HE.Enc}_{pk_0}(a_0,b_0)).\]    
    \STATE \hspace{0.6em} Using bounds $(t,h)$ and the code $\mathsf{CSS}[[m,1,d]]$, prepare a pool of encrypted magic states (for all $\mathsf{T/H}$ gates) with traps, and denote the collection by
           $\mathsf{MagicPool}$.
    \STATE \hspace{0.6em} Run
           $(\Gamma_1,\dots,\Gamma_t) \leftarrow \mathsf{GadgetGen}(t,sk_0,\dots,sk_{t-1})$
           to generate all $\mathsf{T}$-gate gadgets.
    \STATE \hspace{0.6em} Send the authenticated evaluation key package to TP
           \[
             \rho_{\mathsf{evk}} \leftarrow
               \mathsf{MAC.Sign}_{k_{\mathsf{TA}}}\big(
                 \mathsf{MAC.Sign}_{k_{\mathsf{TA}}}(\mathsf{ KEY_{eval}},\widetilde{\pi}),\mathsf{MagicPool},
                 \Gamma_1,\dots,\Gamma_t,\widetilde{\sigma}_{\mathsf{aux}}
               \big).
           \]
    \STATE\hspace{0.6em} To obtain the
           secret shares $(sk_t^1,\dots,sk_t^n) \leftarrow \mathsf{SS.Share}(sk_t,n)$ and send the key package $\mathsf{key}_i:=(pk_0,sk_t^i,\pi)$ to $P_i$.
           
    \STATE \textbf{Phase 2: Encryption}
    \STATE For Parties $P_i$: 
    \STATE \hspace{0.6em} Run $k_i\leftarrow \mathsf{MAC.KeyGen}(1^\kappa)$ to obtain the MAC key for $P_i$.
    \STATE\hspace{0.6em} Let $\mathcal{S}_i=\{x^{(i)}_1,\dots,x^{(i)}_{l_i}\}$ and set $L := \max_{i} l_i$.
                     Fix a global ordering $\mathcal{U}=(x_1,\dots,x_L)$ of the universe. Prepare a sequence of $L$ qubits
                     $\mathcal{S}^Q_i=(|{\sigma^i_1}\rangle,\dots,|\sigma^i_L\rangle)$ where
                     \[
                       |\sigma^i_j\rangle=
                       \begin{cases}
                         \ket{1}, & \text{if } x_j\in S_i,\\[0.3em]
                         \ket{0}, & \text{if } x_j\notin S_i.
                       \end{cases}
                     \]
    \STATE\hspace{0.6em} For all according qubit, $P_i$ do $\sigma^{\,i\prime }_j \leftarrow \mathsf{CSS.Encode}(\sigma^i_j)$
    and add traps with permutation $\sigma^{\,i\prime\prime}_j \leftarrow 
            \mathsf{permute}_{\pi}\big(\sigma^{\,i\prime }_j \otimes \ket{0^m}\otimes\ket{+^m}\big)$,   
    and encrypt $\sigma^{\,i\prime\prime\prime}_{j} \leftarrow
            \mathsf{X}^{a^i_j} \mathsf{Z}^{b^i_j}\,\sigma^{\,i\prime\prime}_j\,\mathsf{X}^{a^i_j} \mathsf{Z}^{b^i_j}$ with QOTP keys $(a^i_j,b^i_j)\leftarrow_R\{0,1\}^{3m}$ and return
            \[
            \widetilde{\sigma}^{\,i}_{j} \leftarrow \sigma^{\,i\prime\prime\prime}_{j}\otimes \mathsf{MAC.Sign}_{k_{i}}(\mathsf{HE.Enc}_{pk_0}(a^i_j,b^i_j)).
            \]
    \STATE\hspace{0.6em} Send
            $\widetilde{\mathcal{S}}^Q_i =
              (\widetilde{\sigma}^i_1,\dots,\widetilde{\sigma}^i_L)$ to TP
            (Quantum state together with the classical QOTP keys under MAC).
     \algstore{alg:main}  
  \end{algorithmic}
\end{algorithm*}
\begin{algorithm*}[!tp]\ContinuedFloat
\caption{MP-QPSI (continued)}
\begin{algorithmic}[1]
  \algrestore{alg:main} 
     \STATE \textbf{Phase 3: Homomorphic Evaluation}
     \STATE For Third Party (TP): For clarity of presentation, we suppress the explicit position index $j$ and
describe the evaluation on a generic wire; the same procedure is applied
independently to every position.

     \STATE \hspace{0.6em} Initialise the computation transcript $log$ and fix the quantum circuit $C^{\mathsf{AND}}$ (i.e. $\mathsf{C}^n \mathsf{X}$) that implements the multi-party logical AND.   Decompose it into alternating $c$ Clifford and $t$ $\mathsf{T}$-layers:
             \[
               C^{\mathsf{AND}}
               =
               C^{(1)}_{\mathsf{Cliff}}
               \circ C^{(1)}_{\mathsf{T}}
               \circ C^{(2)}_{\mathsf{Cliff}}
               \circ \cdots \circ
               C^{(t)}_{\mathsf{T}}
               \circ C^{(c)}_{\mathsf{Cliff}} .
             \]
    \STATE \hspace{0.6em} For Clifford layers $C^{(\ell)}_{\mathsf{Cliff}}$: Processes the $\mathsf{TrapTP.EvalCliff}\big(
            C^{(\ell)}_{\mathsf{Cliff}},
            \widetilde\sigma,\widetilde a,\widetilde b,
            \mathsf{KEY}_{eval},\widetilde \pi,log
          \big)$ which applies $C^{(\ell)}_{\mathsf{Cliff}}$ to $\widetilde\sigma$ and,
          via $\mathsf{HE.Eval}$, homomorphically update the encrypted QOTP keys $(\widetilde a,\widetilde b)$ and append this step to $log$.
          \STATE \hspace{0.6em} For $\mathsf{T}$ layers $C^{(\ell)}_{\mathsf{T}}$: Processes the $\mathsf{TrapTP.EvalT}\big(
            C^{(\ell)}_{\mathsf{T}},
            \widetilde\sigma,\widetilde a,\widetilde b,
            \Gamma_{\ell},\mathsf{MagicPool},
            \mathsf{KEY}_{eval},\widetilde \pi,log
          \big)$
          which consumes one gadget
          $\Gamma_{\ell}$ and performs the required Bell measurements through a magic state from $\mathsf{MagicPool}$ to remove the $\mathsf{P}$-correction caused by $\mathsf{T}$ gate, and then uses
          $\mathsf{HE.Eval}$ to homomorphically perform the
          $\mathsf{Recrypt}$ of the QOTP key and intermediate values and randomness are appended to $log$. Conceptually, 
          \[
             \big(\widetilde a_{[pk_{\ell+1}]},\widetilde b_{[pk_{\ell+1}]}\big)
               \leftarrow
               \mathsf{HE.Eval}_{evk_{\ell}}\!\big(
                 \widetilde a_{[pk_{\ell}]},\widetilde b_{[pk_{\ell}]},\mathsf{KEY}_{eval}
               \big)
          \]
          We write subscript $\left [ \cdot  \right ] $ for the encryption under the specific key.
      \STATE \hspace{0.6em} After all Clifford and $\mathsf{T}$-layers of $C^{\mathsf{AND}}$ have been
           processed, the multi–controlled gate $\mathsf{C}^n\mathsf{X}$ implementing
           the logical AND has been applied to the 
           $\widetilde\sigma_{\mathsf{aux}}$ resulting the output $\widetilde\sigma_{\mathsf{out}}$ with $(\widetilde a^{\,\mathsf{out}},\widetilde b^{\mathsf{out}})$ encrypted under $pk_{t}$. 

      \STATE \hspace{0.6em} TP sends the classical data with circuit description
             $\big(\widetilde a^{\,\mathsf{out}},\widetilde b^{\mathsf{out}},log, C^{\mathsf{AND}} \big)$
             to all parties $\{P_i\}$ and $(\widetilde\sigma_{\mathsf{out}},log)$ to  TA.

    \STATE \textbf{Phase 4: Verification and Decryption}
    \STATE \underline{Classical.VerDec:}
    \STATE ($P_i$)\hspace{0.4em}MAC Check: $P_i$ verify its own initial QOTP key ciphertexts
           recorded in $log$: $\mathsf{MAC.Ver}_{k_i}(\widetilde a_i,\widetilde b_i)$.
           if any check fails, set $\mathsf{C.Accept} \gets 0$ and abort.
    \STATE ($P_i$)\hspace{0.6em}Gate array Check: Reconstruct from $log$ the sequence of evaluated gates and compare it with the prescribed gate array $C^{\mathsf{AND}}$. If they differ, set $\mathsf{C.Accept} \gets 0$ and abort.
    \STATE ($P_i$)\hspace{0.6em}Transcript Check: Jointly run the standard transcript checking procedure on $log$ using the public evaluation
           keys $\mathsf{KEY}_{eval}$ and threshold decryption with
           $\{sk_t^i\}_{i\in[n]}$:
           \[
             \mathsf{flag}_{\mathsf{FHE}}
               \leftarrow \mathsf{CheckLog}(\mathsf{KEY}_{eval},log,\{sk_t^i\}). \text{  \quad If  } \mathsf{flag}_{\mathsf{FHE}}=\mathsf{rej}, \text{set } 
           \mathsf{C.Accept} \gets 0 \text{  and abort.}
           \]
    \STATE Otherwise, set $\mathsf{C.Accept} \gets 1$ and the participants obtain the final QOTP keys $(a^{\mathsf{out}}, b^{\mathsf{out}})$ via threshold decryption: \[(a^{\mathsf{out}}, b^{\mathsf{out}}) \leftarrow \mathsf{TFHE.Dec}_{\{sk_t^i\}}(\widetilde a^{\,\mathsf{out}},\widetilde b^{\mathsf{out}})\]
    and send $\mathsf{C.Accept}$ and $(a^{\mathsf{out}}, b^{\mathsf{out}})$ to TA.
    \STATE \underline{Quantum.VerDec:}
    \STATE (TA)\hspace{0.4em}MAC Check: TA verifies the MAC on the evaluation package:
           $\mathsf{MAC.Ver}_{k_{\mathsf{TA}}}(\rho_{\mathsf{evk}})$.
    \STATE (TA)\hspace{0.4em}Trapcode Check: For the $3m$–qubit block, to remove the
             one–time pad:
             \[
               \sigma_{\mathsf{out}} \leftarrow
                 \mathsf{X}^{a^{\mathsf{out}}} \mathsf{Z}^{b^{\mathsf{out}}}\,
                 \widetilde\sigma_{\mathsf{out}}\,
                 \mathsf{X}^{a^{\mathsf{out}}} \mathsf{Z}^{b^{\mathsf{out}}}.
             \]
      Undo the permutation and split target and trap registers:
             \[
               (\sigma_{\mathrm{target}},
                \sigma^{\mathsf{X}\text{-trap}},
                \sigma^{\mathsf{Z}\text{-trap}})
                 \leftarrow
                 \mathsf{permute}_{\pi^{-1}}(\sigma_{\mathsf{out}}).
             \]
      Measure the $\mathsf{X}$–trap block $\sigma^{\mathsf{X}\text{-trap}}$ in the
             computational basis and the $\mathsf{Z}$–trap block $\sigma^{\mathsf{Z}\text{-trap}}$
             in the Hadamard basis; if any outcome is nonzero (resp.\ non–$\ket{+}$),
             then return $(\bot,\mathsf{Accept}\leftarrow 0)$.
      \STATE (TA)\hspace{0.4em}Else, set $\mathsf{Q.Accept} \leftarrow 1$, apply $\mathsf{CSS.Decode}$ to $\sigma_{\mathrm{target}}$ to recover the
logical target qubit, measure it in the according basis, compare the outcome
with the classical bit associated with $\sigma_{\mathsf{aux}}$ to determine the
intersection indicator $b \in \{0,1\}$, and broadcast $b$ to all participants with $\mathsf{Accept}\leftarrow 1$.
  \end{algorithmic}
\end{algorithm*}

\section{Security Analysis}\label{analysis}
\subsection{Correctness}\label{QPSO_Correctness}

To validate the correctness of our logical multi-party AND gate, we instantiate
the quantum subroutine using the multi-controlled gate with \emph{v-chain}
ancilla through the IBM Quantum Platform and the \texttt{qiskit}
library.  In particular, for the three-party case we implement a
three-controlled gate $C_{(3)}^{\mathsf{AND}}$ with controls $q_0,q_1,q_2$ and
target $q_4$, as shown in Fig.~\ref{C3X}.  On the logical level this gate
realises the map:
\[
  C_{(3)}^{\mathsf{AND}} :
  \ket{q_0,q_1,q_2}_Q \ket{0}_{q_4}
  \;\longmapsto\;
  \ket{q_0,q_1,q_2}_Q \ket{q_0\wedge q_1\wedge q_2}_{q_4}
\]
for all $q_0,q_1,q_2\in\{0,1\}$.  We further initialise $(q_0,q_1,q_2)$ in the
uniform superposition $\frac{1}{\sqrt{8}}\sum_{(q_0,q_1,q_2)\in\{0,1\}^3}
\ket{q_0,q_1,q_2}$ and simulate the circuit on \texttt{qiskit}.  The output
distribution of $q_4$, reported in Fig.~\ref{output}, coincides
with the ideal truth table of the logical AND.

The same functional behaviour extends immediately to an $n$-party setting and
to all $L$ positions of the universe. For each position $j$ we therefore use an $n$-controlled gate $C^{\mathsf{AND}}$ with controls $(q_{1,j},\dots,q_{n,j})$ and target $t_j$. In our protocol this logical gate is surrounded by CSS code, trap-code encoding, permutation and QOTP, the whole map is as:
\[
\hspace*{-0.5cm}%
\begin{aligned}
  &\Bigl(\mathsf{CSS.Decode}\circ
    \mathsf{permute}_\pi^{-1}\circ
    \mathsf{QOTP}^{-1}
  \Bigr)
  \circ
  C_j^{\mathsf{AND}}
  \circ\\
  &\Bigl(
    \mathsf{QOTP}\circ
    \mathsf{permute}_\pi\circ
    \mathsf{CSS.Encode}
  \Bigr)
  \bigl(
    \ket{q_{1,j},\dots,q_{n,j}}_{Q_j}\ket{0}_{t_j}
  \bigr)
  \\
  &=\ket{q_{1,j},\dots,q_{n,j}}_{Q_j}
  \ket{\bigwedge_{i=1}^n q_{i,j}}_{t_j}.
\end{aligned}
\]
Moreover, the correctness of the encode layer follows immediately
from the \emph{evaluation correctness} of TFHE and the \emph{correctness} of the MAC
(Definitions~\ref{TFHE.Correctness} and~\ref{MAC.correctness}).
\begin{figure*}[t]
    \centering
    \includegraphics[width=0.7\linewidth]{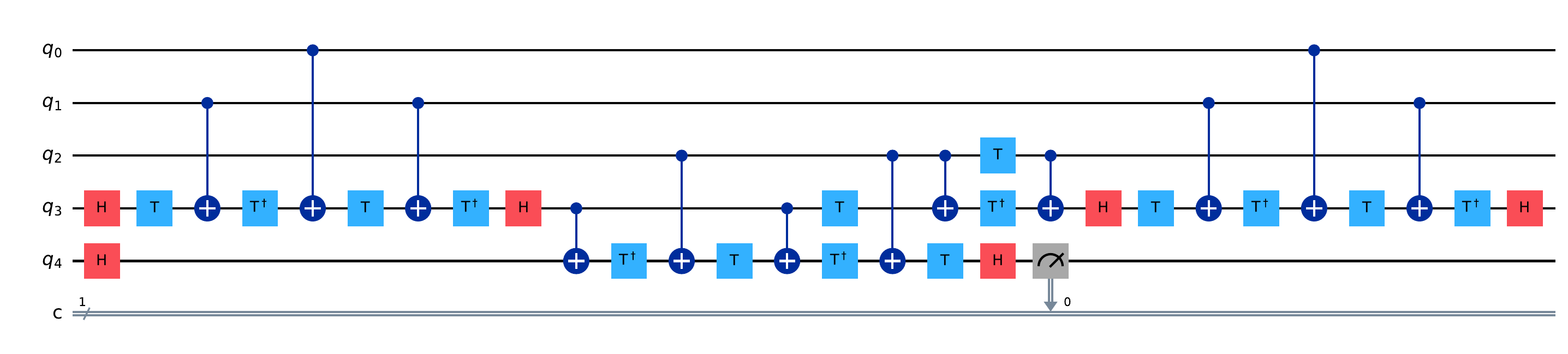}
    \caption{Decomposition of the three-controlled AND with v-chain ancilla.}
    \label{C3X}
\end{figure*}
\begin{figure*}[t]
    \centering
    \includegraphics[width=0.7\linewidth]{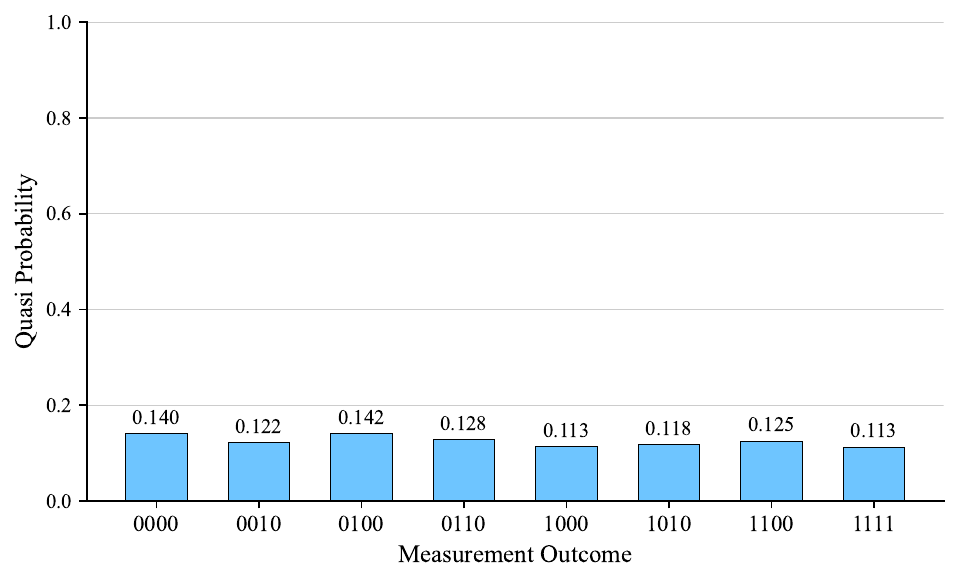}
    \caption{The Measurement outcome of $q_4$ together with the $q_0,q_1,q_2$ in uniform superposition.}
    \label{output}
\end{figure*}
\subsection{Participant Privacy}\label{Privace}
In this section, we will evaluate the privacy of the participants against TP, outside eavesdropper and other participants as follows:
\begin{enumerate}[label=(\arabic*)]
  \setlength{\itemsep}{0pt}
  \setlength{\parskip}{0pt}
  \setlength{\topsep}{2pt}
    \item TP learns nothing about the private sets of $P_i$;
    \item Any outside eavesdropper is unable to learn anything about the private sets of $P_i$;
    \item $P_i$ cannot obtain extra information about the private sets of other participants beyond the intersection, even when conspiring with other participants or TP.
\end{enumerate}
\begin{theorem}[Privacy against TP]\label{P_against_TP}
\emph{In the MP-QPSI scheme proposed in Protocol~\ref{MP-QPSI},  TP cannot obtain any information about the private sets of participants.}
\end{theorem}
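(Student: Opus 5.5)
We argue by a hybrid (simulation) argument over TP's complete view, showing that it is computationally independent of the participants' sets $\{\mathcal{S}_i\}$. TP's view consists of four parts. The first is the evaluation package $\rho_{\mathsf{evk}}$, which is generated by TA independently of any $\mathcal{S}_i$. The second is, for every party $i$ and position $j$, the quantum block $\sigma^{\,i\prime\prime\prime}_j$ together with $\mathsf{MAC.Sign}_{k_i}(\mathsf{HE.Enc}_{pk_0}(a^i_j,b^i_j))$. The third is everything TP produces itself during $\mathsf{TrapTP.EvalCliff}$ and $\mathsf{TrapTP.EvalT}$: measurement outcomes, gadget outputs, recrypted key ciphertexts, and $log$. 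The fourth is the public keys $\mathsf{KEY}_{eval}$ and $\widetilde\pi$. TP never receives $\pi$, any $sk_\ell$, or any share $sk_t^i$, and it never sees the decrypted output. We build a simulator $\mathcal{S}_{\mathsf{TP}}$ that reproduces this view with every input qubit replaced by $\ket{0}$, and we show that no QPT distinguisher can tell the two views apart.

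\textbf{Hybrid 1} replaces every classical ciphertext $\mathsf{HE.Enc}_{pk_\ell}(\cdot)$ of QOTP keys and of $\pi$ by an encryption of $0$. This is done level by level, from $\ell=t$ down to $\ell=0$. At each level we appeal to IND-CPA security of the underlying HE/TFHE scheme; TP holds no share of $sk_t$, so Lemma~\ref{tfhe-basic} guarantees that $sk_t$ is unavailable to it. The delicate step is that the gadgets $\Gamma_\ell$ from $\mathsf{GadgetGen}$ depend on $sk_{\ell-1}$. We handle this exactly as in the circular-security-free argument for TrapTP~\cite{vQFHE17,Dulek16}: a gadget built from $sk_{\ell-1}$ is itself hidden under $pk_\ell$, so once the level-$\ell$ ciphertexts have been switched, the gadget can be replaced by a gadget for a random key, and only then is level $\ell-1$ switched. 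The MAC tags are computed deterministically from the ciphertexts and keys that TP does not hold, so they carry over to each hybrid and leak nothing extra.

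\textbf{Hybrid 2.} After Hybrid 1, the QOTP keys $(a^i_j,b^i_j)\in\{0,1\}^{3m}$ no longer appear anywhere in TP's view except as the pad on the quantum block. Applying Definition~\ref{QOTP} qubit-wise shows that each block $\sigma^{\,i\prime\prime\prime}_j$ is exactly $\mathbb{I}_{2^{3m}}/2^{3m}$, whatever value $\sigma^i_j$ takes. Equivalently, the state is perfectly independent of whether $x_j\in\mathcal{S}_i$, and so is the permutation $\pi$ hiding the traps. This step is information-theoretic.

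We then have to confirm that the outcomes TP obtains during evaluation give it no further information. Each Bell measurement on a magic state or gadget yields a result that is uniformly random: it is a measurement on a maximally mixed (padded) system. TP can therefore generate every entry of $log$ by itself from Hybrid 2 data. This establishes that the two views are computationally indistinguishable, so TP learns nothing about $\mathcal{S}_i$. The only public quantity is the common length $L$, which we treat as leakage permitted by the model.

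\textbf{Main obstacle.} We expect the hardest step to be the gadget/recryption chain in Hybrid 1, that is, justifying that $\Gamma_1,\dots,\Gamma_t$ do not reveal $sk_0,\dots,sk_{t-1}$ and hence earlier-level keys. Our plan is to import the corresponding lemma of TrapTP~\cite{vQFHE17} directly rather than prove it again. The only new point to check is that threshold sharing of $sk_t$ changes nothing, because TP holds none of the shares.
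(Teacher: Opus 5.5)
Your proposal is correct and uses the same ingredients as the paper's proof: the QOTP twirl that makes each $3m$-qubit block equal to $\mathbb{I}/2^{3m}$, IND-CPA security of the HE/TFHE layer, and the TrapTP / Dulek et al.\ gadget lemma for the key chain. Where you differ is in how these pieces are put together.

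The paper argues the two parts side by side. It averages the quantum block over uniformly random pads, and separately asserts that the classical ciphertexts and tags are indistinguishable, citing IND-CPA of TrapTP and TFHE, Lemma~1 of \cite{Dulek16}, and EUF-CMA of the MAC. You instead run a single hybrid chain. The computational step comes first: switch the ciphertexts under $pk_t,\dots,pk_0$, and replace $\Gamma_\ell$ by a gadget for a random key once level $\ell$ has been switched. Only then do you apply the twirl.

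That ordering is what makes the combination sound. Conditioned on $\mathsf{HE.Enc}_{pk_0}(a^i_j,b^i_j)$, the pads are statistically determined, so averaging over them is legitimate only after those ciphertexts have been replaced. The paper leaves this implicit, or effectively delegates it to TrapTP's IND-CPA theorem as a black box. That is shorter, but it hides your observation that threshold sharing of $sk_t$ plays no role against TP alone, since TP holds no shares.

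Your treatment of the MAC tags is also the accurate one. The tags are simulatable because the MAC keys are independent of the sets; EUF-CMA is an unforgeability notion and contributes nothing to hiding. You also correctly flag $L$ (and $n$) as leaked, which the paper's statement passes over.

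One simplification: you do not need the claim that TP's Bell-measurement outcomes are uniformly random. In the protocol as written, TP receives nothing after Phase~2. So everything it computes, including $log$ and any deviating operations by a malicious TP, is post-processing of the view you have already shown to be input-independent.
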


\begin{proof}
\emph{Quantum part.}
According to the encryption phase of Protocol~\ref{MP-QPSI}, from the point of view of the TP who receives quantum state under QOTP:
\[
  \tilde\sigma
  =\mathsf{X}^a \mathsf{Z}^b\,
  \mathsf{permute}_\pi(\mathsf{CSS.Encode}(\sigma)\otimes|0^m\rangle\otimes|+^m\rangle)\,
  \mathsf{Z}^b \mathsf{X}^a .
\]
where $(a,b)\in\{0,1\}^{3m}$, According to the property and average over the uniformly random key yields TP's reduced density matrix
\[
  \rho^{(\sigma)}_{\mathrm{TP}}
  =
  \frac{1}{4^{3m}}
  \sum_{a,b} \tilde\sigma
  = \frac{\mathbb{I}}{2^{3m}},
\]
which is maximally mixed and information-theoretically secure. In particular, $\rho^{(0)}_{\mathrm{TP}}=\rho^{(1)}_{\mathrm{TP}}$, so the success probability of any Unambiguous State Discrimination (USD) test \cite{Herzog05} is negligible which means TP cannot extract any information on the logical bits from the quantum registers.

\emph{Classical part.}
All input-dependent classical data are encrypted under TFHE and authenticated by MAC. Combining the IND-CPA security of TrapTP and TFHE with Lemma 1 in \cite{Dulek16}, and EUF-CMA security of the MAC, all classical ciphertexts and tags observed by TP are computationally indistinguishable. 
\end{proof}

\begin{theorem}[Privacy against eavesdroppers]\label{P_against_Outside}
\emph{Any external adversary $\mathcal{E}$ that can wiretap both the quantum and classical channels and perform arbitrary quantum attacks still learns no information about the private sets.}
\end{theorem}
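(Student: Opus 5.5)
The plan is to reduce the eavesdropper $\mathcal{E}$ to a party whose view is no richer than the TP's. Then Theorem~\ref{P_against_TP} supplies most of the argument, and only a few additions are needed.

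First I will list everything $\mathcal{E}$ can see on the channels. There are three kinds of traffic.
\begin{enumerate}[label=(\roman*)]
\item Participant-to-TP traffic: the blocks $\widetilde{\sigma}^{\,i}_j$ together with $\mathsf{MAC.Sign}_{k_i}(\mathsf{HE.Enc}_{pk_0}(a^i_j,b^i_j))$.
\item TA-to-TP traffic: $\rho_{\mathsf{evk}}$.
\item TP-to-participant and TP-to-TA traffic: $(\widetilde a^{\,\mathsf{out}},\widetilde b^{\mathsf{out}},log,C^{\mathsf{AND}})$ and $\widetilde\sigma_{\mathsf{out}}$.
\end{enumerate}
The data in (i)--(iii) is exactly what the TP holds, since the TP is the endpoint of (i)--(ii) and the origin of (iii). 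Whatever $\mathcal{E}$ passively extracts from these channels is therefore a function of the TP's view.

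For the quantum registers I will repeat the QOTP averaging from Definition~\ref{QOTP}. Without the keys $(a,b)$, every intercepted $3m$-qubit block is $\mathbb{I}/2^{3m}$, independent of the logical bit. This gives $\rho^{(0)}_{\mathcal{E}}=\rho^{(1)}_{\mathcal{E}}$, even when $\mathcal{E}$ keeps purifications or entangles ancillas with the channel. The same holds for $\widetilde\sigma_{\mathsf{out}}$, because its key $(a^{\mathsf{out}},b^{\mathsf{out}})$ is encrypted under $pk_t$.

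For the classical registers, all input-dependent values are TFHE ciphertexts under $pk_0,\dots,pk_t$. Lemma~\ref{tfhe-basic} together with SS privacy (Definition~\ref{SS.Privacy}) shows that an outsider holding no share cannot decrypt them. IND-CPA security then makes them computationally indistinguishable, by the same hybrid argument as the classical part of Theorem~\ref{P_against_TP}.

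Next come the messages that do not pass through the TP. These are the plaintext key packages $\mathsf{key}_i=(pk_0,sk_t^i,\pi)$, the decrypted $(a^{\mathsf{out}},b^{\mathsf{out}})$ sent to TA, and the broadcast $b$. I will treat them explicitly. I will assume, as is standard, that TA-to-participant and participant-to-TA classical links are authenticated and private channels, for example realized by QKD. Under that assumption, what remains visible to $\mathcal{E}$ is the broadcast intersection bit, which is the allowed output. Any additional leakage is confined to channels already modeled as secret.

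Active attacks need a separate argument, since $\mathcal{E}$ may tamper with or substitute states and messages. Any modification of classical ciphertexts is caught by the EUF-CMA security of the MAC (Definition~\ref{def:MAC-EUF-CMA}) in the MAC checks of Phase~4. Any nontrivial quantum tampering is detected by the trap code, with the $\kappa$-SEM-VER guarantee of Definition~\ref{sem-ver} making the accept/reject decision simulatable without the plaintext. So an active $\mathcal{E}$ either causes an abort with an input-independent $\bot$, or its effect is equivalent to an honest run. Either way $\mathcal{E}$'s final view can be simulated from the allowed output alone.

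I expect the main obstacle to be this active case, specifically showing that the abort or accept event does not itself leak the inputs. I would handle it first by invoking the simulator $\mathcal{S}$ from Definition~\ref{sem-ver}, which decides acceptance independently of the plaintext.
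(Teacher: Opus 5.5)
Your route matches the paper's proof. The paper also reduces the wiretapped traffic to the situation of Theorem~\ref{P_against_TP}: QOTP-masked blocks are maximally mixed, and the keys sit under TFHE and MAC. It then argues that intercept--resend style attacks are caught by the traps in $\mathsf{VerDec}$. You are more careful than the paper in two respects. First, you state openly the private-channel assumption that the paper's phrase ``without the QOTP keys and permutation'' uses silently; the assumption is necessary, since otherwise $\mathcal{E}$ reads every $sk_t^i$ and $\pi$ off $\mathsf{key}_i$. Second, you use the simulator of Definition~\ref{sem-ver} to show that the abort event itself is input-independent.

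The gap is that your inventory of traffic not passing through TP is incomplete, so the claim that ``any additional leakage is confined to channels already modeled as secret'' is false as stated. In Phase~4 the participants \emph{jointly} run $\mathsf{CheckLog}$ with threshold decryption, and then $\mathsf{TFHE.Dec}_{\{sk_t^i\}}(\widetilde a^{\,\mathsf{out}},\widetilde b^{\mathsf{out}})$. Both steps exchange partial decryptions on participant-to-participant links, which are neither TP traffic nor TA$\leftrightarrow P_i$ links. An $\mathcal{E}$ who wiretaps these needs no share: it combines the partial decryptions itself and learns $(a^{\mathsf{out}},b^{\mathsf{out}})$ plus whatever $\mathsf{CheckLog}$ opens. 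Your step ``an outsider holding no share cannot decrypt'' does not cover this.

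The same leak undermines your active case. $\kappa$-SEM-VER concerns a $\mathsf{VerDec}_{sk}$ executed privately by one key holder, which is exactly the paper's collapse of TA and honest participants in $\mathrm{H}_1$. A verification whose intermediate decryptions are publicly visible is not such an algorithm. This matters because only the initial key ciphertexts and $\rho_{\mathsf{evk}}$ are MAC-signed; the data $(\widetilde a^{\,\mathsf{out}},\widetilde b^{\mathsf{out}},log)$ that TP sends to the participants are not. So your claim that every classical modification is caught by EUF-CMA is too strong. An active $\mathcal{E}$ could substitute those ciphertexts and try to steer what the participants decrypt in the open.

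The fix is to assume private authenticated channels among \emph{all} of TA and the participants. Then $\mathsf{Classical.VerDec}$ and $\mathsf{Quantum.VerDec}$ form one private computation, and your reduction to Theorem~\ref{P_against_TP} plus SEM-VER goes through. Sending $b$ over those same links also gives the theorem's ``no information,'' rather than your weaker ``nothing beyond the intersection.''
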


\begin{proof}
Same as mentioned in Theorem~\ref{P_against_TP}, each logical qubit is first encoded under CSS, traps, permutation and QOTP. Thus an outside eavesdropper without the QOTP keys and permutation obtains no information about the private
sets due to the fact that the quantum states are maximally mixed, and the classical information are encrypted under TFHE and authenticated by MAC.

Moreover, our encoding makes the protocol against standard outside attacks such as intercept-measure-resend, measurement-resend, dense-coding attacks, and auxiliary-qubit attacks. \cite{Hoi05} For example, any attempt from $\mathcal{E}$ to intercept and re-prepare the sequence without knowing the trap positions and their correct measurement bases will inevitably introduce errors which will be detected in the $\mathsf{VerDec}$ procedure, and post-interaction auxiliary qubit can not reveal meaningful information due to the maximally mixed states.
\end{proof}

\begin{theorem}\label{P_against_Pi} \emph{Let $\mathcal{P}_C$ be any subset of corrupted participants with $|\mathcal{P}_C|<t$.
Even if the parties in $\mathcal{P}_C$ collude with the TP, they obtain no information about the honest parties' private sets beyond the intersection.}
\end{theorem}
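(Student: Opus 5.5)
We argue by a simulation-style hybrid over the joint view of the coalition $\mathcal{A}=\mathcal{P}_C\cup\{\mathrm{TP}\}$. We show that this view can be produced, up to negligible distance, from the corrupted parties' own inputs together with the intersection indicator $b$ alone. The view splits into three parts. The first is what TP receives: the quantum ciphertexts $\widetilde{\mathcal{S}}^Q_i$, the encrypted keys $\mathsf{HE.Enc}_{pk_0}(a^i_j,b^i_j)$ with their MAC tags, the package $\rho_{\mathsf{evk}}$ including $\widetilde\pi$ and the gadgets, and the final $(\widetilde a^{\,\mathsf{out}},\widetilde b^{\mathsf{out}})$. The second is what each $P_c\in\mathcal{P}_C$ holds: $\mathsf{key}_c=(pk_0,sk_t^c,\pi)$, its own QOTP keys and MAC key $k_c$, and its messages in the threshold decryption. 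The third is the broadcast output $b$.

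\emph{Step 1 (key shares).} Since $|\mathcal{P}_C|<t$, the privacy of $\mathsf{SS}$ (Definition~\ref{SS.Privacy}) together with Lemma~\ref{tfhe-basic} shows that the shares $\{sk_t^c\}_{c\in\mathcal{P}_C}$ are distributed independently of $sk_t$. Hence the coalition cannot reconstruct $sk_t$. It also holds none of $sk_0,\dots,sk_{t-1}$, since these stay with TA and appear only inside gadgets and recrypt keys. We therefore replace the shares by shares of an independent dummy key.

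\emph{Step 2 (classical key ciphertexts).} Next we replace the honest parties' ciphertexts $\mathsf{HE.Enc}_{pk_0}(a^i_j,b^i_j)$, and every ciphertext derived from them in the levelled chain, by encryptions of $0$. This is a sequence of IND-CPA hybrids, one per level $\ell=0,\dots,t$. Each hybrid uses the circular-security-free levelled structure of TrapTP: the gadgets are exactly what Lemma~1 of \cite{Dulek16} handles when it reduces the security of the gadget-augmented scheme to IND-CPA of $\mathsf{HE}$. The MAC tags are produced with the honest parties' keys $k_i$, which are independent of the plaintexts, so they are simulatable. The level-$t$ step is legitimate only because Step~1 removed $sk_t$ from the view.

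\emph{Step 3 (quantum ciphertexts).} With $(a^i_j,b^i_j)$ now hidden from the coalition, the QOTP twirl of Definition~\ref{QOTP} makes each honest block $\widetilde\sigma^i_j$ maximally mixed, exactly as in Theorem~\ref{P_against_TP}. Here the main obstacle appears. The permutation $\pi$ is given to every participant, and hence to $\mathcal{A}$, but this does not matter for privacy: the QOTP alone already yields $\mathbb{I}/2^{3m}$ regardless of $\pi$. Knowing $\pi$ would, however, let $\mathcal{A}$ tamper undetectably, for example by altering $C^{\mathsf{AND}}$ to exfiltrate an honest input into the output block. So privacy against an active coalition must rest on the Classical.VerDec gate-array and transcript checks together with the honest participants' MAC checks. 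We would argue that any deviation from the prescribed circuit is caught with overwhelming probability, leaving $\mathcal{A}$ only the honest evaluation. That evaluation's output is $\Phi_{C^{\mathsf{AND}}}$, by Definition~\ref{vQFHE.Correctness} and Theorem~\ref{P_against_TP} applied to the remaining ciphertexts.

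\emph{Step 4 (decryption phase and output).} Finally, we simulate the honest parties' threshold-decryption messages from $(a^{\mathsf{out}},b^{\mathsf{out}})$ using the simulatability of TFHE partial decryptions \cite{Boneh18}. These keys are uniformly random, fresh masks on the output block, so they carry no input information. The block $\widetilde\sigma_{\mathsf{out}}$ goes only to TA and never returns to $\mathcal{A}$, so the only remaining input-dependent value is $b=\bigwedge_i q_{i,j}$, which is exactly the intersection. Composing the hybrids then gives negligible distinguishing advantage.

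The step we expect to be delicate is Step~3: an active TP colluding with holders of $\pi$ can modify the computation. If the checks do not cover this fully, we would fall back to proving the statement for a semi-honest TP and add the active case via the verifiability result.
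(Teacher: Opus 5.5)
Your plan follows the paper's route. The paper's proof has an interception case, handled by SS privacy plus Lemma~\ref{tfhe-basic} for the key shares and the QOTP twirl for the quantum blocks. It then has a collusion case, handled by the bare assertion that honest parties' MAC keys are unknown, so any deviation touching ciphertexts or the log fails some honest MAC or gate-array check. Your Steps~1--3 reproduce exactly this, with a level-by-level IND-CPA hybrid and a partial-decryption simulation layered on top.

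\textbf{The gap in Step~3.} The gap is the one you flag in Step~3, and neither your sketch nor the paper closes it. The sharper threat, though, is not tampering with quantum data; it is getting an honest key out through decryption. As specified, the classical checks authenticate only the initial key ciphertexts and the gate list. Nothing binds the pair $(\widetilde a^{\,\mathsf{out}},\widetilde b^{\mathsf{out}})$ that TP sends to the participants to the target wire of the checked log. Moreover, the transcript check itself runs threshold decryption on TP-supplied log data, so decryption happens inside the check rather than after it.

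This matters because TP keeps the honest parties' control blocks after evaluation (only $\widetilde\sigma_{\mathsf{out}}$ goes to TA), and it gets $\pi$ from any corrupted $P_c$. One threshold decryption of an honest control wire's level-$t$ key would then let the coalition strip the pad, unpermute and decode an honest input. TA's later trap failure and abort do not undo that.

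For the same reason, your Steps~2 and~4 do not go through against an active TP. The level-$t$ IND-CPA reduction has no $sk_t$, so it can simulate honest partial decryptions only for ciphertexts whose plaintext it already knows. Your claim that $(a^{\mathsf{out}},b^{\mathsf{out}})$ are fresh uniform masks presupposes that the decrypted ciphertext is the honestly derived one. You also leave the threshold decryptions inside $\mathsf{CheckLog}$ unsimulated.

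What is missing is an explicit binding-and-ordering argument with two parts:
\begin{enumerate}
  \item honest parties recompute the output-key ciphertext themselves from the verified log for the target wire;
  \item nothing input-dependent is threshold-decrypted before the classical checks pass, including during $\mathsf{CheckLog}$.
\end{enumerate}

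\textbf{The fallback does not work.} You propose to fall back on the paper's verifiability result for the active case, but this fails for three reasons.
\begin{itemize}
  \item It is circular: the paper's verifiability proof invokes this very theorem in the step $\mathrm{H}_1\to\mathrm{H}_2$.
  \item Its final step rests on TrapTP's $\kappa$-SEM-VER, which treats $\pi$ as secret key material hidden from the server. That fails here, as you yourself observed.
  \item SEM-VER models an adversary acting only between $\mathsf{Enc}$ and a single honest $\mathsf{VerDec}$. Here the coalition holds key shares and sees threshold-decryption outputs during verification, so even a valid verifiability statement would not bound that leakage.
\end{itemize}
As written, your argument establishes the claim only when the ciphertexts submitted for decryption are the honestly derived ones, essentially the semi-honest-TP case. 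The paper's proof covers the active case only by the same unproved assertion.
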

\begin{proof}
(1) \emph{Intercepting honest channels to TP.}
The coalition $\mathcal{P}_C$ may try to intercept all classical and quantum messages sent by honest parties to TP and reconstruct their private sets. Classically, honest parties only send TFHE ciphertexts of QOTP keys together with MAC tags. Since the TFHE secret key $sk$ is shared with a $n_t$-out-of-$n$ SS scheme like Definition~\ref{ss}, The privacy of SS in Definition~\ref{SS.Privacy} and Lemma~\ref{tfhe-basic} implie that the shares held by $\mathcal{P}_C$ (with $|P_C|<t$) reveal no information about $sk$.  On the quantum side, without the corresponding QOTP keys the intercepted states are exactly the QOTP-masked codewords which is maximally mixed state and information-theoretically secure.

(2) \emph{Colluding with TP to deviate in the evaluation.}
The coalition $\mathcal{P}_C$ may also collude with TP and ask TP to run a modified
evaluation to extract extra information (e.g., a different gate array) and try to pass
$\mathsf{Classical.VerDec}$ procedure.  Members of $\mathcal{P}_C$ can reveal their own MAC keys
$k_i$ to TP to pass the local MAC checks and gate array check of corrupted side, but MAC keys of honest participants remain unknown,  any deviation that affects ciphertexts or logs will cause some MAC or gate array check from honest side to fail. We can set a threshold number of necessary passes for the local checks to ensure that the whole $\mathsf{Classical.VerDec}$ process detects and terminates the aforementioned corruption with an overwhelming probability.

In both cases, the joint view of $\mathcal{P}_C$ can be simulated from its own inputs and the intersection alone, hence $\mathcal{P}_C$ gains no extra information about the other
participants' private sets.
\end{proof}
\begin{figure*}[t]
    \centering
    \includegraphics[width=0.55\linewidth]{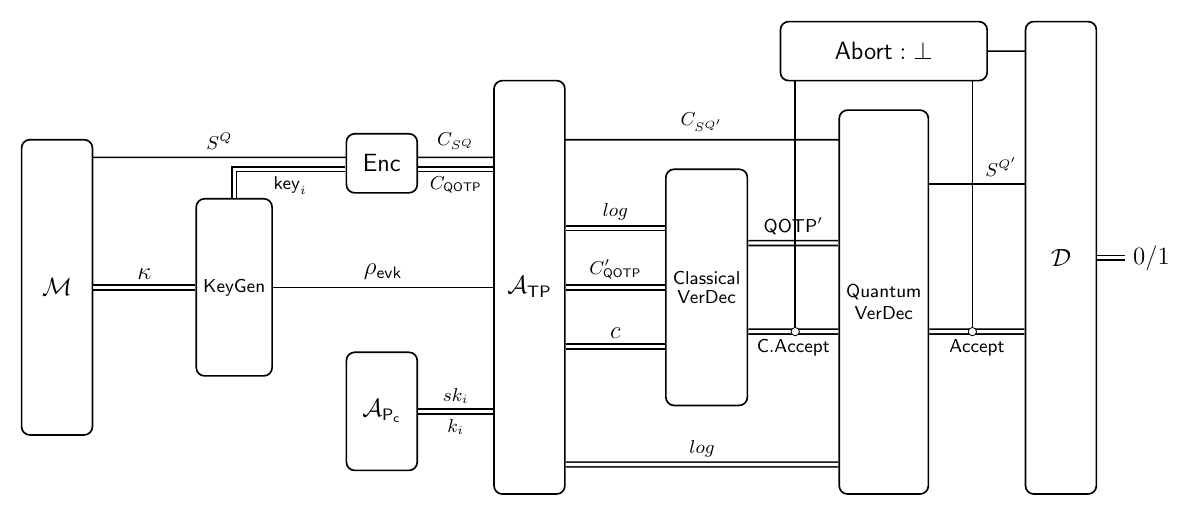}%
    \hfill
    \includegraphics[width=0.45\linewidth]{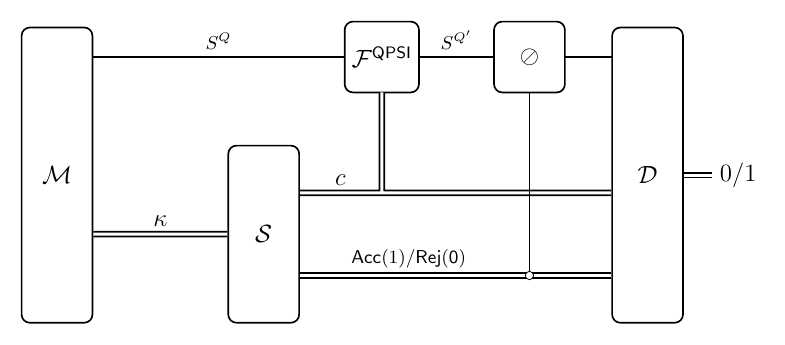}
    \caption{
        The Real-world (left) and Ideal-world (right) model.
    }
    \label{model}
\end{figure*}
\begin{figure*}[t]
    \centering
    \includegraphics[width=0.8\linewidth]{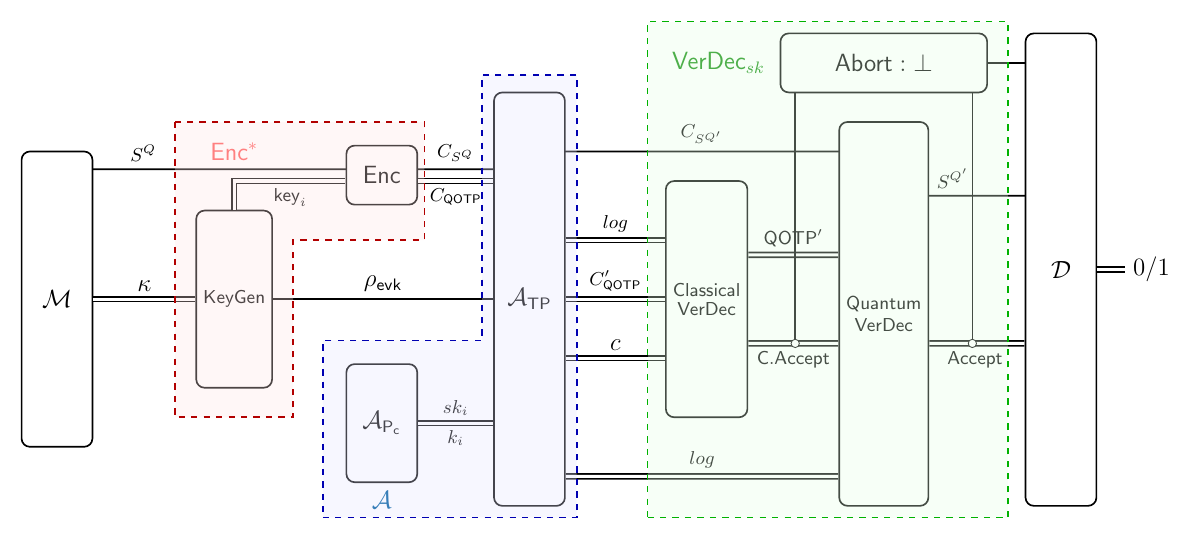}
    \caption{Modularly encapsulated hybrid of $H_0$.}
    \label{real1}
\end{figure*}
\subsection{Verifiability Against Malicious TP}\label{Verifiablity}
We now turn to the verifiability of our protocol against a malicious TP. At a high level, the security of a \textsf{vQFHE} scheme requires that any server who deviates significantly from the prescribed evaluation map $\mathsf{vQFHE.Eval}$ will be caught by the verification procedure: whenever the algorithm $\mathsf{VerDec}$ accepts, the decrypted output state must be close to the ideal output obtained by applying the target circuit to the input.

Following the $\kappa$-SEM-VER notion in Definition~\ref{sem-ver}, together with the quantum authentication~\cite{Dupuis12,Broadbent16} and semantic secrecy~\cite{Alagic16}, we define a Real experiment and an Ideal experiment for our MP-QPSI protocol as depicted in Fig.~\ref{model}.  Informally, the figure captures that every adversary $\mathcal{A}$ with access to the ciphertexts in the Real world can be simulated by a simulator $\mathcal{S}$ that has access
only to an ideal PSI functionality in the Ideal world.  In the remainder of this section, we \emph{encapsulate} certain components of the Real model into \emph{functionality blocks} and use \emph{hybrid chain} to prove the \hyperref[ver]{verifiability} that for any quantum poly-time distinguisher
$\mathcal{D}$ \label{ver}:
\[
  \bigl|
    \Pr\bigl[\mathcal{D}(\mathsf{Real}_{\mathcal{A}}^{\Pi^{\mathsf{MP-QPSI}}})=1\bigr]
    -
    \Pr\bigl[\mathcal{D}(\mathsf{Ideal}_{\mathcal{S}}^{\mathcal{F}^{\textsf{QPSI}}})=1\bigr]
  \bigr|
  \leq \mathsf{negl}(\kappa),
\]
which formalises that the Real and Ideal executions of our protocol are
computationally indistinguishable and the protocol is verifiable in the semantic sense.\\

\noindent$\mathbf{H_0}$: This is the Real model $\mathsf{Real}_{\mathcal{A}}^{\Pi^{\mathsf{MP-QPSI}}}$ as shown in Fig.~\ref{model}, which is designed based on the interaction process of our proposed MP-QPSI protocol \ref{MP-QPSI}. \\

\noindent$\mathbf{H_1}$: From the Real experiment $\mathrm{H}_0$, we first pass to $\mathrm{H}_1$ by encapsulating the \textsf{KeyGen} and \textsf{Enc}
modules into a single functionality $\mathsf{Enc}^{*}$ as shown in the red areaof Fig.~\ref{real1}. For a subset of honest parties $\mathcal{P}_S$, with $|S|\ge t$. By the correctness of SS (Definition~\ref{SS.Correctness}), from the viewpoint of the adversary we equivalently treat “TA plus the honest
participants” as a single honest entity that locally holds $sk$,
just as in the original TrapTP setting:
\[
  \mathsf{view}_{\mathcal{A}}^{\mathrm{H_0}}
  \bigl( \{sk_i\}_{i\in S}, \,\cdot\,  \bigr)
  \;\equiv\;
  \mathsf{view}_{\mathcal{A}}^{\mathrm{H_1}}
  \bigl( sk ,\,\cdot\, \bigr)
\]
which lead to $\mathsf{AdvHyb}_{\mathrm{H_0}}^{\mathrm{H_1}}(\mathcal{A}) = 0$ where $\mathsf{AdvHyb}$ is used as distinguishing program between hybrid models.
\\

\noindent$\mathbf{H_2}$: In Fig.~\ref{real1}, the blue region corresponds to the TP together with a subset of colluding participants $\mathcal{P}_C$. In hybrid $\mathrm{H}_2$ we conceptually bundle these components
into a single unified adversary $\mathcal{A}=(\mathcal{A}_{\mathsf{TP}},\mathcal{A}_{\mathcal{P}_C})$ who controls both the TP and the corrupted parties.  By Theorem~\ref{P_against_Pi} and SS privacy (Definition~\ref{SS.Privacy}), any information about the honest private sets that can be obtained jointly by $\mathcal{A}_{\mathsf{TP}}$ and $\mathcal{A}_{\mathcal{P}_C}$ can be closely simulated from the view of the unified adversary $\mathcal{A}$. For $\mathsf{AdvHyb}_{\mathrm{H}_1}^{\mathrm{H}_2}(\mathcal{A})$:
\[
  \Bigl|
    \Pr\bigl[\mathcal{A}(\mathsf{view}^{\mathrm{H}_1}_{\mathcal{A}}(\,\cdot\,))=1\bigr]
    -
    \Pr\bigl[\mathcal{A}(\mathsf{view}^{\mathrm{H}_2}_{\mathcal{A}}(\,\cdot\,,\{k_i,sk_i\}_{i\in C}))=1\bigr]
  \Bigr| 
\]
with probability $\mathsf{negl (\kappa)}$. So, treating TP and the colluding participants as a single adversarial interface does not alter the distribution of the adversary's view with overwhelming probability.\\

\noindent$\mathbf{H_3}$: In $\mathrm{H}_3$, we encapsulate the green part of Fig.~\ref{real1}, namely \textsf{Classical.VerDec}, \textsf{Quantum.VerDec} and the \textsf{Abort} box, into a single verification algorithm $\mathsf{VerDec}_{sk}$.   The critical point in this encapsulation step is whether the distributed verification process in our scheme can achieve the same level of verifiability as $\mathsf{VerDec}_{sk}$ in $\mathsf{vQFHE}$. That is, any deviation from the target circuit will be rejected at this stage. 
\begin{lemma}\label{LH3}
    For any QPT adversary $\mathcal{A}$,  there exists adversaries $\mathcal{B}_1$ and $\mathcal{B}_2$ such that:
    \[
    \mathsf{AdvHyb}_{\mathrm{H}_2}^{\mathrm{H}_3}(\mathcal{A}) \le \mathsf{Adv}_{\mathsf{MAC}}^{\mathsf{euf-cma}}(\mathcal{B}_1)+\mathsf{Adv}^{\mathsf{Trap}}(\mathcal{B}_2)+\mathsf{negl}(\kappa).
    \]
\end{lemma}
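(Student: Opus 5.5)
The plan is a standard bad-event argument. Hybrids $\mathrm{H}_2$ and $\mathrm{H}_3$ differ only in how verification is carried out. In $\mathrm{H}_2$ it is distributed: the honest $P_i$ run MAC, gate-array and transcript checks, and TA runs its MAC check and trap check. In $\mathrm{H}_3$ a single $\mathsf{VerDec}_{sk}$ is used, as in TrapTP~\cite{vQFHE17}.

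The two executions are identical unless the distributed procedure accepts while $\mathsf{VerDec}_{sk}$ rejects, or the reverse. The reverse case cannot occur: by MAC correctness (Definition~\ref{MAC.correctness}) and SS correctness (Definition~\ref{SS.Correctness}), honestly generated data pass every local check exactly when they pass $\mathsf{VerDec}_{sk}$. So I would define a bad event $\mathsf{Bad}$ as ``the distributed verification outputs $\mathsf{acc}$ but $\mathsf{VerDec}_{sk}$ outputs $\mathsf{rej}$.'' The difference lemma then gives $\mathsf{AdvHyb}_{\mathrm{H}_2}^{\mathrm{H}_3}(\mathcal{A})\le\Pr[\mathsf{Bad}]$.

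Next I would split $\mathsf{Bad}$ according to which component of $\mathsf{VerDec}_{sk}$ rejects.

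$\mathsf{Bad}_1$: some classical object carried in $log$ or in $\rho_{\mathsf{evk}}$ differs from what was signed, yet carries a valid tag under a key held by an honest party or by TA. These objects are the initial QOTP-key ciphertexts, $\mathsf{KEY}_{eval}$, $\widetilde\pi$ and $\widetilde\sigma_{\mathsf{aux}}$. I would build $\mathcal{B}_1$ as follows. It guesses which honest key, among $k_{\mathsf{TA}}$ and the $k_i$ with $i\notin C$, is involved, and simulates everything else itself: it samples all HE keys, shares and the other MAC keys, and answers signing requests through the EUF-CMA oracle of Definition~\ref{def:MAC-EUF-CMA}. When $\mathsf{Bad}_1$ occurs it outputs the forged pair. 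Because the coalition's own $k_i$ are given to $\mathcal{A}$, as in $\mathrm{H}_2$, the guess ranges only over honest keys. The resulting polynomial loss of at most $n+1$ is absorbed into the adversary's advantage or into $\mathsf{negl}(\kappa)$.

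$\mathsf{Bad}_2$: all MACs and the gate array are consistent, but the $log$ or the quantum output would be rejected by TrapTP's $\mathsf{CheckLog}$ or trap check. The transcript check in our protocol runs the same $\mathsf{CheckLog}$, with threshold decryption replacing $sk$. By Lemma~\ref{tfhe-basic} and TFHE correctness (Definition~\ref{TFHE.Correctness}), it agrees with the centralized check except with probability $\mathsf{negl}(\kappa)$. TA's trap check is literally TrapTP's quantum check once $(a^{\mathsf{out}},b^{\mathsf{out}})$ has been decrypted correctly.

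The remaining event is that the trap-code authentication accepts a logically wrong output. I would reduce this to $\mathcal{B}_2$, an adversary against the TrapTP/trap-code authentication security of~\cite{Broadbent16,vQFHE17}. $\mathcal{B}_2$ embeds its challenge ciphertexts as the honest parties' blocks and as $\widetilde\sigma_{\mathsf{aux}}$, and plays the honest side with a single $sk$. This is legitimate because hybrid $\mathrm{H}_1$ already merged the shares into $sk$. $\mathcal{B}_2$ then forwards $\mathcal{A}$'s output block and $log$ to its own verifier. Summing the three contributions gives the bound in Lemma~\ref{LH3}.

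The main obstacle is justifying the reduction to $\mathcal{B}_2$. Colluding participants know $\pi$, because $\mathsf{key}_i$ contains it, and they know their own QOTP keys, whereas trap-code security assumes the permutation is hidden. My first approach would be to observe that the honest blocks and $\widetilde\sigma_{\mathsf{aux}}$ are still QOTP-masked with keys unknown to $\mathcal{A}$. I would then argue that, under the honest Pauli one-time pad, any attack decomposes by Pauli twirling into a mixture of Pauli errors on those blocks. Detection of nontrivial logical Pauli errors then follows from the CSS distance $d$ together with the $\ket{0}$/$\ket{+}$ traps, and I would try to recover the needed randomness from the hidden Pauli keys rather than from $\pi$. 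If this argument fails, I would instead assume per-party permutations $\pi_i$, the variant the protocol itself mentions, and treat the honest permutations as hidden, so that $\mathcal{B}_2$ matches the standard TrapTP game exactly.
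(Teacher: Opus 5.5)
Your decomposition uses the same ingredients as the paper's proof, which is only a union bound. The paper asserts that $\Pr[\mathsf{Accept}=1\wedge S^{Q'}\neq\Phi_{C^{\mathsf{AND}}}(S^{Q})]$ is covered by EUF-CMA forgery, ``soundness of the trap code'', HE evaluation error and CSS decoding failure. It never builds $\mathcal{B}_1,\mathcal{B}_2$ and never notices that $\pi\in\mathsf{key}_i$. So the obstacle you flag is real and the paper offers nothing to resolve it. But neither of your fallbacks closes it, and the $\mathcal{B}_2$ step remains a genuine gap.

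The twirling route fails outright. Knowing $\pi$, the coalition can apply a logical $\mathsf{X}$ supported on the CSS data positions of the target block, or of an honest input block. This attack is already a Pauli, so averaging over the hidden QOTP keys leaves it unchanged (it commutes with $\mathsf{X}^a\mathsf{Z}^b$ up to sign). It touches no $\ket{0}$- or $\ket{+}$-trap. As a nontrivial logical operator it has zero syndrome, so $\mathsf{CSS.Decode}$ does not flag it. TA then accepts a flipped bit with probability $1$. The Pauli keys only hide content; all of the trap code's detection probability comes from the secrecy of $\pi$. Hence the event $\mathcal{B}_2$ is meant to account for has probability $1$. It cannot be bounded by any advantage in the standard hidden-$\pi$ game, and $\mathcal{B}_2$ could not simulate $\mathcal{A}$'s view anyway, since that view contains $\pi$.

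The per-party $\pi_i$ fallback changes the protocol rather than proving the lemma, and it does not fit TrapTP either. TrapTP applies $\mathsf{CNOT}$ transversally, which implements the logical gate and preserves the traps only between blocks that share one permutation. In addition, the permutation of $\widetilde\sigma_{\mathsf{aux}}$ would still have to be hidden from every participant.

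Your difference-lemma framing also needs repair. The trap check is the same procedure in $\mathrm{H}_2$ and $\mathrm{H}_3$, so ``the trap code accepts a wrong output'' makes both verifiers accept and is not part of $\mathsf{Bad}$. Likewise, a successful forgery passes $\mathsf{MAC.Ver}$ whoever runs it. The secrecy of $\pi$ is really consumed at $\mathrm{H}_3\to\mathrm{H}_4$, where TrapTP's $\kappa$-SEM-VER is invoked, so even with $\mathcal{B}_2$ removed the hidden-$\pi$ problem resurfaces there.

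What genuinely differs between $\mathrm{H}_2$ and $\mathrm{H}_3$ is who executes verification and decryption. This is where two of your claims fail: that ``the reverse case cannot occur'', and that the transcript check agrees with $\mathsf{CheckLog}$ up to $\mathsf{negl}(\kappa)$. Both assume the verification is run honestly. In $\mathrm{H}_2$, however, the corrupted $P_i$ run their own local checks, contribute shares to $\mathsf{TFHE.Dec}$, learn $(a^{\mathsf{out}},b^{\mathsf{out}})$ and forward it to TA.
\begin{itemize}
\item They can force aborts that $\mathsf{VerDec}_{sk}$ would never produce.
\item The protocol does not say how TA combines the forwarded keys or whether shares are verifiable. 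So they can make TA un-pad with $(a^{\mathsf{out}}\oplus e,b^{\mathsf{out}})$, where $e$ is a logical $\mathsf{X}$ on the data positions. This leaves an undetected logical error, and both procedures accept but output different bits.
\end{itemize}
Your $\mathsf{Bad}$ compares only accept/reject decisions, so it does not capture the second case. Definition~\ref{TFHE.Correctness} presumes honestly computed shares, so it does not bound it either.
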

\begin{proof}
In our protocol the verification is combined as
\[
  \mathsf{Accept}=1
  \;\Longleftrightarrow\;
  \mathsf{C.Accept}=1 \;\land\; \mathsf{Q.Accept}=1,
\]
where $\mathsf{C.Accept}$ is the outcome of \textsf{Classical.VerDec} (MAC check, gate array check and log check) and $\mathsf{Q.Accept}$ is the outcome of \textsf{Quantum.VerDec}
(trapcode check). By correctness and EUF–CMA security of the MAC in Definition~\ref{MAC.correctness}, evaluation correctness of TFHE in Definition~\ref{TFHE.Correctness}, and soundness of the trap code. We can directly derive that the probability:
\[
  \Pr[\mathsf{Accept}=1\wedge S^{Q'} \neq \Phi_{C^{\mathsf{AND}}}(S^{Q})]
\]
is smaller than $\mathsf{Adv}_{\mathsf{MAC}}^{\mathsf{euf-cma}}(\mathcal{B}_1)+\mathsf{Adv}^{\mathsf{Trap}}(\mathcal{B}_2)+\mathsf{negl}(\kappa)$ where they represent the maximal advantages of any QPT adversaries $\mathcal{B}_1,\mathcal{B}_2$ in breaking the MAC check and the trapcode check, and the negligible errors $\mathsf{negl}(\kappa)$ come from two parts: $\mathsf{negl}_\mathsf{HE}(\kappa)$ comes from evaluation of $\mathsf{HE}$ in log check and gate array check and $\mathsf{negl}_\mathsf{CSS}(\kappa)$ comes from the decoding failure \cite{G09}.
\end{proof}
According to the Lemma above, the joint map of \textsf{Classical.VerDec}, \textsf{Quantum.VerDec} and \textsf{Abort} is computationally indistinguishable from an ideal single verification routine $\mathsf{VerDec}_{sk}$.\\

\noindent$\mathbf{H_4}$: After encapsulating the above three blocks, the resulting hybrid model is in fact an instance of the \emph{semantic real} model of Definition~5 in ~\cite{vQFHE17} where the \textsf{vQFHE} is a combination of TrapTP and TFHE, instantiated with our logical circuit $C^{\mathsf{AND}}$.  All structural requirements of that definition are met: our decryption algorithm belongs to $\mathsf{NC}^1$ and the circuit for $C^{\mathsf{AND}}$ has a polynomially bounded number of $\mathsf{T}$ gates. Moreover, Theorem~5 in~\cite{vQFHE17} already shows that TrapTP is $\kappa$-IND-VER, and hence $\kappa$-SEM-VER. Applying the definition of $\kappa$-SEM-VER in Definition~\ref{sem-ver} to $\mathrm{H}_3$, we can replace the semantic real model by the corresponding ideal model.  This yields exactly the ideal functionality $\mathsf{Ideal}_{\mathcal{S}}^{\mathcal{F}^{\textsf{QPSI}}}$ shown in Fig.~\ref{model} and lead to $\mathsf{AdvHyb}_{\mathrm{H}_3}^{\mathrm{H}_4}\le \mathsf{negl}(\kappa)$ .

As mentioned above, we can finally bound $\mathsf{AdvHyb}_{\mathrm{H}_0}^{\mathrm{H}_4}$ by a negligible function in \(\kappa\) based on the hybrid reduction derived from the preceding intermediate analysis and lead to the \hyperref[ver]{verifiability} of our scheme mentioned above.

\section{Flexible and Modular Realizations of MP-QPSO}\label{Framework}
The verifiability analysis in Section~\ref{Verifiablity} already suggests a modular interpretation of our construction.  At a high level, the MP-QPSI protocol against malicious TP and collusion can be seen as the composition of a small number of black-box modules: (i) a threshold layer makes the scheme resistant to collusion of the participants; (ii) a vQFHE layer against any deviation from the target circuit; (iii) a logical quantum circuit implementing the desired set operation. Consequently, different circuit instantiations or alternative QHE constructions can be plugged into the same framework to achieve different effects. 

In the remainder of this section we illustrate the modular nature of our framework through several examples, including choosing alternative intersection circuits to reduce the number of \(\mathsf{T}\)-gates, modifying the circuit to realize QPSU, and replacing the underlying QHE logic to obtain gadget-free instantiations.
\subsection{Circuit-Level Variants for Intersection}
\begin{figure*}[t]
    \centering
    \includegraphics[width=0.7\linewidth]{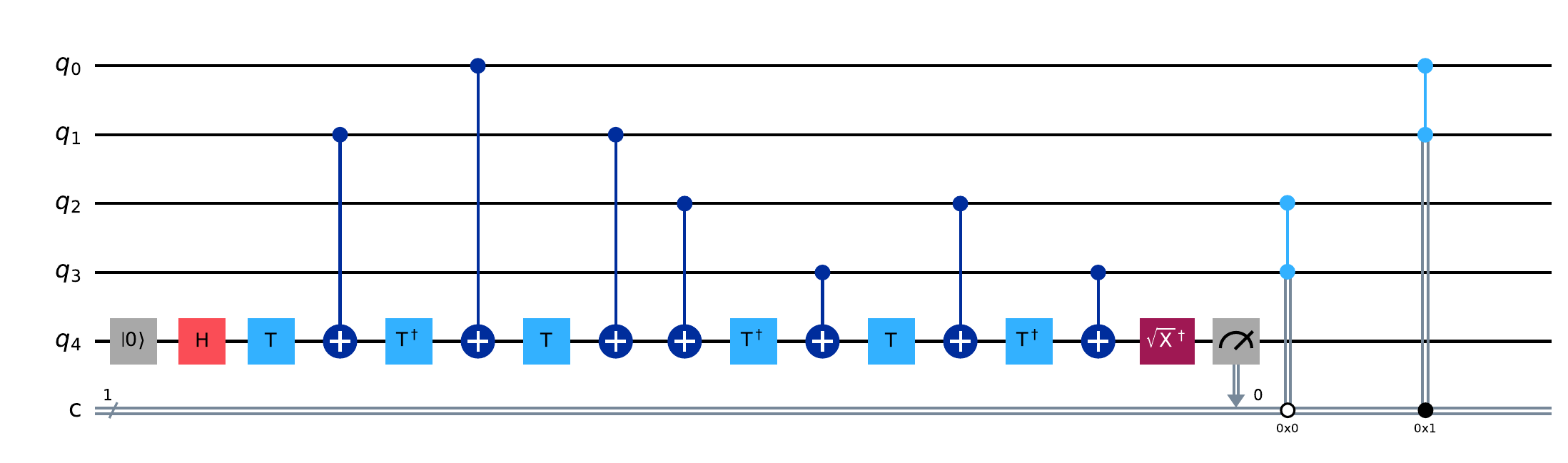}
    \caption{CCCZ circuit built using only 6 T gates.}
    \label{C3z}
\end{figure*}
\emph{CCCZ-based reduction of \,\(\mathsf{T}\)-gates.} In our basic instantiation, the multi-party AND circuit \(C^{\mathsf{AND}}\) is  a \(\mathsf{C}^n \mathsf{X}\) with v-chain ancilla which costs approximately \(7(n-1)\) \(\mathsf{T}\)-gates for \(n\) inputs.  By instead expressing the intersection as a multi-controlled phase and using the CCCZ construction of~\cite{GC21} (we have simulated the realization in \texttt{qiskit} as Fig.~\ref{C3z}), we can implement an \(n\)-controlled AND with only \(4n-6\) \(\mathsf{T}\)-gates.  Thus, within the same MP-QPSI framework, replacing the original \(\mathsf{C}^n \mathsf{X}\) block by a CCCZ-based realization yields a strictly lower \(\mathsf{T}\)-gate complexity for the homomorphic evaluation phase.

\emph{QuAND realization.} When the TP is equipped with strong quantum capabilities (e.g., a high-scalability superconducting processor), the logical AND subroutine in our intersection circuit can be instantiated using the QuAND gate \cite{Chu21}. A QuAND can be synthesized by combining a single-qubit $X$ gate with an iSWAP-like operation that selectively swaps $\ket{11}\leftrightarrow\ket{20}$, exploiting the $\ket{2}$ ancilla level only for temporary state transfer while maintaining reversibility. Under this setting, \cite{Chu21} have experimentally realized the resulting $n$-qubit controlled-$Z$ (and thus a generalized Toffoli) on an 8-qubit superconducting platform; when compiled onto a 1D chain, it requires $2n-3$ two-qubit-gate cycles and achieves overall depth $\mathcal{O}(n)$.
\subsection{From QPSI to QPSU via Open Controls}\label{SQPSO}
In our MP-QPSI construction, the predicate bit is obtained by the circuit \(C^{\mathsf{AND}}\). To realize QPSU within exactly the same framework, we only need to change this predicate from an AND to an OR.  For Boolean inputs
\(x_1,\dots,x_n \in \{0,1\}\), we use De Morgan’s law
\[
  \operatorname{OR}(x_1,\dots,x_n)
  \;=\;
  \bigvee_{i=1}^n x_i
  \;=\;
  \neg\!\left(\,\bigwedge_{i=1}^n \neg x_i\,\right).
\]
At the circuit level this is implemented by turning the multi–controlled \(\mathsf{C}^n \mathsf{X}\) into an multi–open–controlled gate: each control qubit is conjugated by an \(\mathsf{X}\) gate before and after the control place as shown in Fig.~\ref{OR} (realizing control on \(\ket{0}\) instead of \(\ket{1}\)), and the target qubit is flipped once more to
account for the outer negation. 
\Needspace{\baselineskip}
\begin{center}
  \includegraphics[width=0.9\columnwidth]{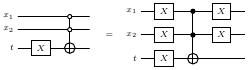}
  \captionof{figure}{Multi-Open-Controlled X Gate}
  \label{OR}
\end{center}
\subsection{Alternative vQHE Instantiations}\label{AltvQFHE}
The verifiability of the underlying vQFHE scheme is what gives our MP-QPSI protocol its distinguishing ability to detect a TP that deviates from the prescribed circuit.  In Protocol~\ref{MP-QPSI} we instantiate this layer with TrapTP~\cite{vQFHE17}, the first vQFHE construction, whose complicate encoding and trap structure yield very strong security guarantees and a clean \(\kappa\)-SEM/IND-verifiability analysis.  This comes at the price of higher complexity: the quantum communication of TrapTP is on the order of \(\mathcal{O}(18n)\).

Within our modular framework, the TrapTP component can be replaced by more communication-efficient vQFHE schemes.  For example, Broadbent’s information-theoretic secure construction~\cite{Broadbent18} reduces the quantum communication to \(\mathcal{O}(6n)\) at the cost of additional interaction between client and server~\cite{LW25}, while more recent schemes such as~\cite{He24,LW25} achieve \(\mathcal{O}(6n)\) quantum communication together with essentially non-interactive verification. All these vQHE variants can be plugged into the MP-QPSO template and combined with the same TFHE-based threshold layer to obtain multi-party, verifiable QPSI and its extensions.

\section{Performance analysis}\label{Performance}
\emph{Quantum Communication Cost.} Each party encodes its $L$-qubit universal private set sequence like the TrapTP scheme \cite{vQFHE17}. Specifically, each block is encoded via a $\textsf{CSS}[[m,1,d]]$ code and augmented with $2m$ trap qubits ($\ket{0}$ and $\ket{+}$), resulting in a $3mL$-qubit authenticated block. The total communication is thus $3nmL$ qubits for $n$ parties. Considering the auxiliary information from TA (pre-shared EPR pairs for $t$ and $h$ gates and auxiliary quantum information). Due to the feasibility constraints of TrapTP instantiations, the parameters satisfy $m, t, h = \mathcal{O}(\kappa)$, the aggregate quantum communication complexity is $\mathcal{O}(n\kappa L)$. 

\emph{Quantum Computation Cost.} TP homomorphically evaluates the multi-party intersection circuit through $3nmL$ qubits. The underlying logical quantum computation consists of $\mathcal{O}(nmL)$ gate operations. On the verifier side, TA measures $2mL$ trap qubits and performs the corresponding decoding checks, which costs $\mathcal{O}(mL)$. With $m=\mathcal{O}(\kappa)$, the overall quantum computation cost is dominated by TP and can be summarized as $\mathcal{O}(n\kappa L)$.
\begin{table*}[t]
  \centering
  \small
  \caption{Comparison of quantum private set protocols.}
  \label{comparison}
  \hspace*{-4mm}
  \begin{tabular}{ccccccccc}
    \toprule
    Protocol
      & \begin{tabular}[c]{@{}c@{}}Quantum\\Resources\end{tabular}
      & \begin{tabular}[c]{@{}c@{}}Quantum\\Technologies\end{tabular}
      & \begin{tabular}[c]{@{}c@{}}Party\\Scenarios\end{tabular}
      & \begin{tabular}[c]{@{}c@{}}Supported\\Functionality\end{tabular}
      & \begin{tabular}[c]{@{}c@{}}TP\\Model\end{tabular}
      & \begin{tabular}[c]{@{}c@{}}Collusion\\Resistance\end{tabular}
      & \begin{tabular}[c]{@{}c@{}}Communication\\Complexity\end{tabular}\\
    \midrule
    \cite{QPSI_Shi15}  & Encoded States & \begin{tabular}[c]{@{}c@{}}Complicated\\operator G\end{tabular} & Two-Party & PSI & Semi-honest & N/A & $\mathcal{O}(L)$  \\
    \cite{QPSI_Zhang20}& GHZ State & \begin{tabular}[c]{@{}c@{}}Pauli\\operation\end{tabular} & Three-Party & \begin{tabular}[c]{@{}c@{}}PSI-CA\\PSU-CA\end{tabular} & Semi-honest & Client & $\mathcal{O}(L)$  \\
    \cite{QPSI_D21}    & Single Photons & \begin{tabular}[c]{@{}c@{}}Asymmetric key\\distribution\end{tabular} & Two-Party & PSI & Semi-honest & N/A & $\mathcal{O}(L+\kappa)$  \\
    \cite{QPSI_Liu21}  & Single Photons & \begin{tabular}[c]{@{}c@{}}Pauli\\operation\end{tabular} & Two-Party & PSI-CA & Semi-honest & N/A & $\mathcal{O}(LlogL)$  \\
    \cite{QPSI_M23}    & Single Photons & \begin{tabular}[c]{@{}c@{}}Pauli\\operation\end{tabular} & Multi-party & PSI & N/A & Single client & $\mathcal{O}(nL)$  \\
    \cite{QPSI_Liu23}    & Single Photons & QHE & Two-party & PSI-CA & Semi-honest & N/A & $\mathcal{O}(L)$  \\
    \cite{QPSI_Chi24}  & Bell States & \begin{tabular}[c]{@{}c@{}}Qubit\\operation\end{tabular} & Three-party & \begin{tabular}[c]{@{}c@{}}PSI\&PSU\\Mixed-CA\end{tabular} & Semi-honest & None & $\mathcal{O}(L)$  \\
    \cite{QPSI_M24}    & Single Photons & \begin{tabular}[c]{@{}c@{}}Rotation\\operation\end{tabular} & Two-Party & Threshold PSI & Trusted & N/A & $\mathcal{O}(\kappa L)$  \\
    \cite{QPSI_Huang24}& Single Photons & \begin{tabular}[c]{@{}c@{}}Rotation\\operation\end{tabular} & Multi-party & PSI & Semi-honest & Client & $\mathcal{O}(n\kappa L)$  \\
    Ours               & Encoded States & vQFHE & Multi-Party & PSI\&PSU & Malicious & TP–client & $\mathcal{O}(n\kappa L)$ \\
    \bottomrule
  \end{tabular}
\end{table*}

\emph{Performance Comparison.} The performance of our proposed MP-QPSO, compared with other state-of-the-art protocols, is summarized in Table~\ref{comparison}. The comparison is conducted across several dimensions, including the quantum resources, quantum technologies, party scenarios, supported functionalities (e.g., PSI/PSU or only cardinality), the security level against the TP and parties (collusion), and the communication complexity. Since our Protocol~\ref{MP-QPSI} is designed for the multi-party setting and can be straightforwardly extended to MP-QPSU via the open-control described in Section~\ref{SQPSO}. More importantly, our framework provides the key distinguishing ability of resisting malicious behaviors from the TP, including conspiring between the TP and a threshold number $n_t$ of parties. 

\emph{Party scenarios}: From Table~\ref{comparison}, we observe that \cite{QPSI_Shi15,QPSI_D21,QPSI_Liu21,QPSI_Liu23,QPSI_M24} are restricted to the two-party setting, typically instantiated in a client-server form. The protocols in \cite{QPSI_Zhang20,QPSI_Chi24} target the three-party setting, whereas only \cite{QPSI_M23,QPSI_Huang24} and our proposal support the multi-party scenario; \emph{Functionality}: None of the existing schemes can compute both the intersection and union (PSI\&PSU) as our protocol does. Specifically, \cite{QPSI_Shi15,QPSI_D21,QPSI_M23,QPSI_M24,QPSI_Huang24} only support PSI, where \cite{QPSI_M24} further considers threshold PSI that reveals the intersection only when it is large enough. \cite{QPSI_Zhang20} supports the cardinality of intersection and union (PSI-CA\&PSU-CA), while \cite{QPSI_Liu21,QPSI_Liu23} provide only PSI-CA. \cite{QPSI_Chi24} focuses on a specialized mixed-cardinality functionality like $\left | (A \cup B) \cap C \right |$; \emph{Security}: among the protocols that rely on TP, most adopt a semi-honest TP assumption, although some schemes have attack-resistance or detection mechanisms (e.g., \cite{QPSI_Shi15,QPSI_D21,QPSI_Huang24}) and most schemes do not address collusion. Only \cite{QPSI_Zhang20,QPSI_Huang24} against collusion which is limited to party-only, these works emphasis that the TP cannot conspire with any party which further highlights another advantage of our MP-QPSO framework: it prevents conspiring between the TP and fewer than a threshold number of parties which is enabled by TFHE.

\section{Conclusion}\label{Conclusion}
\emph{Conclusion.} In this work, we present Protocol~\ref{MP-QPSI}, a MP-QPSI construction that extends vQFHE to the multi-party setting via TFHE and instantiates the intersection functionality using a $C^{\mathsf{AND}}$ circuit, accompanied by the corresponding quantum circuits and simulations. We then provide a comprehensive analysis of correctness and participant privacy against TP, eavesdroppers, and collusive behaviors, and further establish verifiability against a malicious TP under the semantic security model. Adopting a framework-level view, we additionally present several modular realizations of the proposed construction, and show that it can be extended to MP-QPSO via the open-control technique. Finally, we compare our framework with a broad range of existing QPSO schemes, and include Table~\ref{comparison} to highlight our distinguishing advantages in terms of the TP model and collusion resistance: a mechanism to detect TP deviations from the prescribed circuit, as well as resilience against collusion between the TP and a subset of participants.

\emph{Future work.} Although Section~\ref{Performance} and Table~\ref{comparison} indicate that our protocol incurs only a moderate quantum cost, the strong security guarantees inherited from TrapTP~\cite{vQFHE17} lead to a substantially higher classical overhead~\cite{LW25}. This classical–quantum trade-off is largely inherent to QHE-based constructions. Moreover, handling \textsf{T} gates in the evaluated quantum circuit typically requires additional gadget, which increases implementation complexity. Accordingly, our future work will focus on the following directions: (1) leveraging encrypted \textsf{CNOT} operations based on trapdoor claw-free function pairs to avoid gadget-based \textsf{T}-handling and to achieve $\mathcal{O}(2n)$ quantum communication complexity~\cite{Mahadev18}; (2) redesigning the $C^{\mathsf{AND}}$ circuit to completely eliminate the complexity from \textsf{T} gates; and (3) exploiting the presence of the TA to further relax the quantum capability requirements imposed on all participants.

\bibliographystyle{alpha3}  
\bibliography{myrefs}       
\end{document}